\newtheorem{theorem}{Theorem}
\newtheorem{proposition}[theorem]{Proposition}
\newtheorem{lemma}[theorem]{Lemma}
\newtheorem{corollary}[theorem]{Corollary}
\theoremstyle{definition}
\newtheorem{definition}[theorem]{Definition}
\newtheorem{remark}[theorem]{Remark}
\newtheorem{example}[theorem]{Example}
\newtheorem{problem}{Problem}
\title{The growth function of $S$-recognizable sets}
\author[\'E. Charlier]{\'Emilie Charlier}
\author[N. Rampersad]{Narad Rampersad}
\address[\'E. Charlier]{David R. Cheriton School of Computer Science, University of Waterloo,
Waterloo, ON N2L 3G1, Canada}
\email[\'E. Charlier]{echarlier@uwaterloo.ca}
\address[ N. Rampersad]{Institute of Mathematics, University of Li\`ege, 
Grande Traverse 12 (B 37), B-4000 Li\`ege, Belgium}
\email[N. Rampersad]{nrampersad@ulg.ac.be}
\DeclareMathOperator{\rep}{rep}
\DeclareMathOperator{\val}{val}
\DeclareMathOperator{\N}{\mathbb N}
\DeclareMathOperator{\Q}{\mathbb Q}
\DeclareMathOperator{\Z}{\mathbb Z}
\begin{document}
\begin{abstract}
A set $X\subseteq\N$ is $S$-recognizable for an abstract numeration system $S$ if the set $\rep_S(X)$ of 
its representations is accepted by a finite automaton. 
We show that the growth function of an $S$-recognizable set is always either $\Theta((\log(n))^{c-df}n^f)$ 
where $c,d\in\N$  and $f\ge 1$, or $\Theta(n^r \theta^{\Theta(n^q)})$, where $r,q\in\Q$ with $q\le 1$. 
If the number of words of length $n$ in the numeration language is bounded by a polynomial, 
then the growth function of an $S$-recognizable set is $\Theta(n^r)$, where $r\in \Q$ with $r\ge 1$. 
Furthermore, for every $r\in \Q$ with $r\ge 1$, we can provide an abstract numeration system $S$ 
built on a polynomial language and an $S$-recognizable set such that the growth function of $X$ is $\Theta(n^r)$. 
For all positive integers $k$ and $\ell$, we can also provide an abstract numeration system $S$ 
built on a exponential language and an $S$-recognizable set such that the growth function of $X$ 
is $\Theta((\log(n))^k n^\ell)$.  
\end{abstract}

\maketitle

\section{Introduction}
A set $X \subseteq \N$ is \emph{$b$-recognizable} if the set of
representations of the elements of $X$ in base $b$ is accepted by a
finite automaton.  The class of $b$-recognizable sets is a very
well-studied class (the main results can be found in the book of
Allouche and Shallit \cite{AllSha}).  It is therefore somewhat
surprising that, to the best of our knowledge, no characterization of
the possible growths of $b$-recognizable sets is currently known.  In
this paper we provide such a characterization for the much more general
class of $S$-recognizable sets.

Let $L$ be a language over an alphabet $\Sigma$. 
We assume that the letters $a_1,\ldots,a_\ell$ of $\Sigma$ are ordered by $a_1<\cdots<a_\ell$. 
This order on the alphabet $\Sigma$ induces an order $<$ on the language $L$ called the {\em genealogic order} 
or the {\em radix order}: words are ordered length by length, and for a given length, we use the lexicographic order. 
This leads to the definition of an abstract numeration system.

\begin{definition}
An \emph{abstract numeration system} is a triple $S = (L, \Sigma, <)$
where $L$ is an infinite language over a totally ordered
finite alphabet $(\Sigma, <)$.  The language $L$ is called the {\em numeration language}.
The map $\rep_S\colon  \N \to L$ is a bijection mapping $n \in \N$ to the $(n+1)$-th word of $L$
ordered genealogically. The inverse map is denoted by $\val_S \colon L \to\N$.
\end{definition}

Most of the time, we assume that $L$ is a regular language. 
Nevertheless, some results hold true for arbitrary numeration languages.

The integer base numeration systems are particular cases of abstract numeration systems since for all $b,n,m\in\N$ with $b\ge2$, we have $n<m\Leftrightarrow \rep_b(n)<\rep_b(m)$, where $\rep_b(x)$ designates the usual greedy representation 
of $x$ in base $b$. In this case, the numeration language is 
$\rep_b(\N)=\{1,2,\ldots,b-1\}\{0,1,2,\ldots,b-1\}^*\cup\{\varepsilon\}$, which is regular.

\begin{definition}
Let $S$ be an abstract numeration system. Let $X\subseteq\N$. 
The set $X$ is \emph{$S$-recognizable} if the language $\rep_S(X)= \{\rep_S(n) \colon n\in X\}$ is regular.
Let $b \geq 2$ be an integer.  
The set $X$ is \emph{$b$-recognizable} if it is $S$-recognizable for the abstract numeration system $S$ built on
the language $\rep_b(\N)$ consisting of the base-$b$ representations of the elements of $X$.
The set $X$ is \emph{$1$-recognizable} if
it is $S$-recognizable for the abstract numeration system $S$ built on $a^*$.
\end{definition}

Note that $\N$ is $S$-recognizable if and only if the numeration language $L$ is regular.

Rigo \cite{Rigo} stated the following two fundamental questions regarding
$S$-recognizable sets:
\begin{itemize}
\item For a given numeration system $S$, what are the $S$-recognizable
  subsets of $\N$?

\item For a given subset $X$ of $\N$, is it $S$-recognizable for some
  numeration system $S$?
\end{itemize}

Of course, both of these questions are quite difficult to address in full
generality.  In this paper, we consider these questions in terms of
the growth functions of both the set $X$ and the numeration
language.

\begin{definition}
For a subset $X$ of $\N$, we let $t_X(n)$ denote the $(n+1)$-th term of $X$. The map
$t_X\colon\mathbb{N}\to\mathbb{N}$ is called the {\em growth function} of $X$.
\end{definition}

We address the following problem.

\begin{problem}
Let $S$ be an abstract numeration system built on a regular language. 
What do the growth functions of $S$-recognizable sets look like? 
Of course, this question has to be answered in terms of the growth function of the numeration language.
\end{problem}

Apart from the following result, known as {\em Eilenberg's gap theorem}, 
we are not aware of any result in the direction of answering the above problem.

\begin{theorem}\cite{Ei}
Let $b\ge2$ be an integer. A $b$-recognizable set $X$ of nonnegative integers satisfies either 
$\limsup_{n\to+\infty} (t_X(n+1)-t_X(n))<+\infty$ or $\limsup_{n\to+\infty} \frac{t_X(n+1)}{t_X(n)}>1$. 
\end{theorem}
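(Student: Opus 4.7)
The plan is to prove the contrapositive: assuming that $\limsup_{n\to\infty} t_X(n+1)/t_X(n) = 1$, we show that $\limsup_{n\to\infty}(t_X(n+1) - t_X(n)) < +\infty$. Let $L = \rep_b(X)$ and let $M = (Q, \{0, 1, \ldots, b-1\}, \delta, q_0, F)$ be the minimal DFA accepting $L$. Since $L$ is regular, the counting sequence $a_n := |L \cap \{0, 1, \ldots, b-1\}^n|$ satisfies a linear recurrence over $\Z$, and so does the partial sum $u_X(n) := |X \cap [0, b^n)| = \sum_{k=0}^{n} a_k$. By the classification of rational non-decreasing integer sequences, there exist an integer $s \geq 0$ and a real $\alpha \in [1, b]$ with $u_X(n) = \Theta(n^s \alpha^n)$, and the proof splits naturally into the cases $\alpha = b$ and $\alpha < b$.

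When $\alpha = b$, the set $X$ has positive asymptotic density in $\N$. One then argues directly: if the gap sequence were unbounded, $X$ would miss arbitrarily long runs of consecutive integers, driving $u_X(n)/b^n$ to $0$ and contradicting $\alpha = b$. So in this case the gaps are bounded.

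When $\alpha < b$, the set $X$ has asymptotic density zero, and we aim to show $\limsup t_X(n+1)/t_X(n) \geq b/\alpha > 1$. Applying the ratio-to-one hypothesis to consecutive $x < y$ in $X$ yields $y - x = o(x)$, so $\rep_b(x)$ and $\rep_b(y)$ have the same length and share a common prefix $p_k$ whose length tends to infinity. A double pigeonhole over the finite state set $Q$ and the pair of digits $(a^\star, c^\star)$ at which the suffixes first differ selects infinitely many $k$ with fixed $q^\star := \delta(q_0, p_k)$ and fixed $a^\star < c^\star$. The density-zero estimate $u_X(n) = o(b^n)$, combined with the structure of the sub-DFAs rooted at the states $\delta(q^\star, d)$ for $d \in \{a^\star, a^\star + 1, \ldots, c^\star\}$, then produces, for infinitely many lengths, an accepted configuration that forces a consecutive pair in $X$ with ratio bounded below by $b/\alpha - o(1)$, contradicting the hypothesis.

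The main obstacle is the combinatorial analysis in the case $\alpha < b$: precisely quantifying how the density-zero bound forces a consecutive pair in $X$ with large \emph{relative} gap, rather than merely an average gap inside a block. A naive averaging inside $[b^{n-1}, b^n)$ yields only an average gap of order $(b/\alpha)^n / n^s$, which is not by itself comparable to the value $b^n$ that would be required for a ratio bounded away from $1$. One must therefore distinguish two regimes — either $X$ concentrates near the top of each block $[b^{n-1}, b^n)$, in which case the jump to $[b^n, b^{n+1})$ produces a ratio close to $b$, or $X$ is spread through the block but very sparse, in which case a pumping-lemma argument applied to the sub-DFA at $q^\star$ produces a specific maximal gap of size $\Omega(b^n)$ — and verify that \emph{both} regimes force $\limsup t_X(n+1)/t_X(n) > 1$.
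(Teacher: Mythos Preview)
The paper does not prove this theorem: it is quoted from Eilenberg's book and used as background. So there is no ``paper's proof'' to compare your attempt against; I can only assess whether your argument stands on its own.

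It does not. In the case $\alpha=b$ you assert that unbounded gaps would ``drive $u_X(n)/b^n$ to $0$'', but this implication is false even for $b$-recognizable sets. Take $b\ge 2$, fix $p\ge 1$, and let $X=\{x\in\N:\rep_b(x)\notin\{1,\ldots,b-1\}\,0^{p}\,\{0,\ldots,b-1\}^*\}$. This set is $b$-recognizable, has density $1-\frac{1}{(b-1)b^{p}}>0$ (so $\alpha=b$), yet the interval $[b^\ell,\,b^\ell+b^{\ell-p})$ is missing from $X$ for every $\ell\ge p$, giving gaps of size $b^{\ell-p}+1\to\infty$. Eilenberg's conclusion still holds for this $X$ because the consecutive pair $(b^\ell-1,\;b^\ell+b^{\ell-p})$ has ratio tending to $1+b^{-p}>1$; but your density argument cannot detect this, and your case $\alpha=b$ would wrongly declare the gaps bounded. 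The case $\alpha<b$ is, by your own admission, only a sketch: you identify two ``regimes'' and state that one must verify both force $\limsup t_X(n+1)/t_X(n)>1$, without carrying out either verification. The pigeonhole/common-prefix setup you describe is in the right spirit, but the crucial step---turning the density estimate $u_X(n)=o(b^n)$ into a \emph{single} consecutive pair with ratio bounded away from $1$, rather than an average gap---is exactly the content of Eilenberg's proof, and it is absent here. As written, neither case is complete.
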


Thanks to this result, examples of sets that are not $b$-recognizable
for any $b$ have been exhibited. The set $\{n^2 : n \in \N\}$ of
squares is such an example.  However, the set of squares is
$S$-recognizable for the abstract numeration system
\[
S = (a^*b^* \cup a^*c^*, \{a,b,c\}, a<b<c).
\]
More generally, Rigo \cite{Rigo} and Strogalov \cite{Strogalov} showed
that for any polynomial $P \in \Q[x]$ such that $P(\N) \subseteq \N$,
there exists $S$ such that $P$ is $S$-recognizable.  Observe that in
the case of an integer base numeration system, the number of words of
each length in the numeration language grows exponentially, whereas in
the case of the numeration system $S$, this number grows polynomially.
This leads to the natural question: Can a set of the form $P(\N)$ ever
be recognized in a numeration system where the numeration language is
exponential?  In Section~\ref{sec:additional}, we show that the answer
to this question is no for all polynomials $P$ of degree $2$ or more.

Let us fix some asymptotic notation.

\begin{definition}
Let $f$ and $g$ be functions with domain $\mathbb{N}$. We say that~$f$ is $\Theta(g)$, 
and we write $f=\Theta(g)$, if there exist positive constants $c$ and $d$ and a nonnegative integer $N$ such that, 
for all integers~$n\ge N$, we have $cg(n)\le f(n)\le dg(n)$.  
We say that $f$ and $g$ have {\em equivalent behaviors at infinity}, 
which is denoted by $f(n)\sim g(n) \ (n\to+\infty)$ (or simply $f\sim g$ when the context is clear), 
if we have $\lim_{n\to+\infty} \frac{f(n)}{g(n)}=1$. 
Finally, we write $f=o(g)$ if  if we have $\lim_{n\to+\infty} \frac{f(n)}{g(n)}=0$.
\end{definition}

\begin{definition}\label{def:complexity}
For any language $L$ over an alphabet $\Sigma$ and any non-negative integer~$n$, we let
\[\mathbf{u}_L(n)=|L\cap\Sigma^n|\]
denote the number of words of length $n$ in $L$ and
\[\mathbf{v}_L(n)=\sum_{i=0}^n \mathbf{u}_L(i)=|L\cap\Sigma^{\le n}|\]
denote the number of words of length less than or equal to $n$ in $L$. The maps
$\mathbf{u}_L\colon\mathbb{N}\to\mathbb{N}$ and $\mathbf{v}_L\colon\mathbb{N}\to\mathbb{N}$ are called the {\em counting (or growth) functions of $L$}.
\end{definition}

When $L$ is a regular language, the sequence $(\mathbf{u}_L(n))_{n\ge 0}$ 
satisfies a linear recurrence relation with integer coefficients (for instance, see \cite{berstel-reutenauer}): 
there exist a positive integer $k$ and $a_1,\ldots,a_k\in\Z $ such that we have
\[\forall n\in\N,\ \mathbf{u}_L(n+k)=a_1 \mathbf{u}_L(n+k-1)+\cdots +a_k \mathbf{u}_L(n).\]
Then, since we have $\mathbf{v}_L(n)-\mathbf{v}_L(n-1)=\mathbf{u}_L(n)$, the sequence $(\mathbf{v}_L(n))_{n\ge 0}$ satisfies the linear recurrence relation  of length $k+1$ whose characteristic polynomial is $(x-1)(x^k-a_1x^{k-1}-\cdots-a_k)$.

Our main result can be stated as follows.

\begin{theorem}\label{the:main}
Let  $S=(L,\Sigma,<)$ be an abstract numeration system built on a regular language 
and let $X$ be an infinite $S$-recognizable set of nonnegative integers. Suppose  
\[\forall i\in\{0,\ldots,p-1\},\ \mathbf v_L(np+i)\sim a_in^c\alpha^n \ (n\to+\infty),\]
for some $p,c\in\N$ with $p\ge1$, some $\alpha\ge1$ and some positive constants $a_0,\ldots,a_{p-1}$,
and 
\[\forall j\in\{0,\ldots,q-1\},\ \mathbf v_{\rep_S(X)}(nq+j)\sim b_jn^d\beta^n \ (n\to+\infty),\]
for some $q,d\in\N$ with $q\ge1$, some $\beta\ge1$ and some positive constants $b_0,\ldots,b_{q-1}$.
Then we have 
\begin{itemize}
\item $t_X(n)=\Theta\left((\log(n))^{c-d\frac{\log(\sqrt[p]{\alpha})}{\log(\sqrt[q]{\beta})}} 
n^{\frac{\log(\sqrt[p]{\alpha})}{\log(\sqrt[q]{\beta})}}\right)$ if $\beta>1$;
\item $t_X(n)=\Theta\left(n^{\frac{c}{d}} (\sqrt[p]{\alpha})^{\Theta(n^{\frac{1}{d}})}\right)$ if $\beta=1$.
\end{itemize}
Furthermore we have
\begin{itemize}
\item $t_X(n)=\Theta\left(n^{\frac{c}{d}} (\sqrt[p]{\alpha})^{(1+o(1))(\frac{n}{b})^{\frac{1}{d}}}\right)$  
if $\beta=1,\ q=1$, and $\mathbf v_{\rep_S(X)}(n)\sim bn^d\ (n\to+\infty)$;
\item $t_X(n)=\Theta\left( n^{\frac{c}{d}} (\sqrt[p]{\alpha})^{(\frac{n}{b})^{\frac{1}{d}}}\right)$  
if $\beta=1, \ q=1$, and $\mathbf v_{\rep_S(X)}(n)= bn^d$ for all $n\in\N$.
\end{itemize}
\end{theorem}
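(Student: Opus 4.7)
My plan is to relate $t_X(n)$ to the given counting asymptotics through the length $\ell_n := |\rep_S(t_X(n))|$ of the representation of the $(n+1)$-th element of $X$. Since the bijection $\rep_S$ is order-preserving, $\rep_S(t_X(n))$ is the $(n+1)$-th word of $\rep_S(X)$ in genealogical order; counting the words of $\rep_S(X)$ of length at most $\ell_n$ and strictly less than $\ell_n$ gives
\[\mathbf{v}_{\rep_S(X)}(\ell_n-1)\le n\le \mathbf{v}_{\rep_S(X)}(\ell_n)-1.\]
The same counting performed in the ambient language $L$, where $\rep_S(t_X(n))$ occupies position $t_X(n)+1$, yields
\[\mathbf{v}_L(\ell_n-1)\le t_X(n)\le \mathbf{v}_L(\ell_n)-1.\]
The proof then reduces to an asymptotic inversion: solve the first double inequality for $\ell_n$ in terms of $n$, and substitute the result into the second.

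From the hypotheses one reads off $\mathbf{v}_L(\ell)=\Theta(\ell^c(\alpha')^\ell)$ and $\mathbf{v}_{\rep_S(X)}(\ell)=\Theta(\ell^d(\beta')^\ell)$ for every $\ell\ge 1$, where $\alpha':=\sqrt[p]{\alpha}$ and $\beta':=\sqrt[q]{\beta}$; the latter reduces to $\Theta(\ell^d)$ when $\beta=1$. When $\beta>1$, I invert the relation $n\asymp \ell_n^d(\beta')^{\ell_n}$ by taking logarithms, obtaining $\ell_n\log\beta' = \log n - d\log\ell_n + O(1)$, and a bootstrap substitution gives $\log\ell_n=\log\log n+O(1)$, so
\[\ell_n = \frac{\log n - d\log\log n + O(1)}{\log\beta'}.\]
Plugging into $\log t_X(n)=c\log\ell_n+\ell_n\log\alpha'+O(1)$ and collecting terms produces the coefficient $\log\alpha'/\log\beta'$ on $\log n$ and $c - d\log\alpha'/\log\beta'$ on $\log\log n$, which exponentiates to the first bullet. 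When $\beta=1$, the inversion trivializes to $\ell_n=\Theta(n^{1/d})$, and substituting into $\mathbf{v}_L(\ell_n)$ directly yields the second bullet.

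The two refinements will follow by sharpening the same inversion under the stronger hypotheses on $\mathbf v_{\rep_S(X)}$. If $q=1$ and $\mathbf v_{\rep_S(X)}(n)\sim b n^d$, the pinching forces $\ell_n=(1+o(1))(n/b)^{1/d}$, which places the exponent of $\alpha'$ into the form $(1+o(1))(n/b)^{1/d}$ and gives the third bullet. Under the equality $\mathbf v_{\rep_S(X)}(n)=bn^d$, the pinching places $\ell_n$ within $O(1)$ of $(n/b)^{1/d}$; since $(\alpha')^{O(1)}$ is an absolute constant, the outer $\Theta$ absorbs the discrepancy and the fourth bullet drops out.

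The main obstacle will be the logarithmic inversion in the case $\beta>1$: I need to control $\log\ell_n$ down to its $\log\log n$ order of magnitude, because otherwise the $(\log n)^{c-d\log\alpha'/\log\beta'}$ prefactor cannot be recovered. Once that inversion is secured, the remainder of the argument is careful bookkeeping against the $\Theta$-bounds on the counting functions, with only minor attention paid to reducing general indices $\ell$ to the given $p$- and $q$-periodic subsequences.
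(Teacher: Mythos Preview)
Your argument is correct and its core is the same as the paper's: pinch both $n$ and $t_X(n)$ between consecutive values of the counting functions $\mathbf v_{\rep_S(X)}$ and $\mathbf v_L$ at a common index (the length of $\rep_S(t_X(n))$), invert the first pinching to express that index in terms of $n$, and substitute into the second. The asymptotic inversion you outline for $\beta>1$ is exactly what the paper obtains by quoting an external lemma, and the $\beta=1$ cases match as well.

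Where you differ is in how you reach the two double inequalities. The paper sets up a product automaton $\mathcal A$, an associated morphism $\mu_{\mathcal A}$, and a coding $g$, and then proves two lemmas establishing $|g(\mu_{\mathcal A}^k(\alpha))|=\mathbf v_L(k)$ and $F(k)=\mathbf v_{\rep_S(X)}(k)$; their integer $k=k(n)$ is your $\ell_n-1$. You bypass this machinery entirely by observing that $\rep_S$ is order-preserving, so $\rep_S(t_X(n))$ is simultaneously the $(n+1)$-th word of $\rep_S(X)$ and the $(t_X(n)+1)$-th word of $L$ in genealogical order, from which the pinching drops out immediately. Your route is shorter and more elementary; the paper's detour through morphisms reflects the broader connection between $S$-automatic sequences and morphic words used elsewhere in the paper, but is not needed for this theorem in isolation.
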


This paper has the following organization. 
In Section~\ref{sec:background}, we recall some necessary background concerning $S$-recognizable sets. 
In Section~\ref{sec:proof}, we provide a proof of Theorem~\ref{the:main}. 
In Section~\ref{sec:particular}, we describe some constructions of abstract numeration systems 
$S$ that effectively realize some particular behaviors of $S$-recognizable sets. 
As a consequence of our main result, we also show that certain behaviors of $S$-recognizable 
sets are never achieved by any abstract numeration systems $S$.
Then, in Section~\ref{sec:examples}, we propose examples illustrating 
every possible behavior of $t_X(n)$ when $X$ is an $S$-recognizable set. 
Finally, in Section~\ref{sec:additional}, we discuss a few additional results. 
Namely, we show that no polynomial can be recognized in an abstract numeration system 
built on an exponential regular language. 
This result extends a well-known result in the integer base numeration systems.

\section{Background on $S$-recognizable sets}\label{sec:background}

\begin{definition}
Let $S$ be an abstract numeration system. 
An infinite word~$x$ over an alphabet $\Gamma$ is {\em $S$-automatic} if, 
for all non-negative integers $n$, its $(n+1)$st letter $x[n]$ is obtained 
by ``feeding'' a DFAO $\mathcal{A}=(Q,\Sigma,\delta,q_0,\Gamma,\tau)$ with the $S$-representation of $n$:
\[\forall n\in\N,\ \tau(\delta(q_0,\rep_S(n)))=x[n].\]
\end{definition}

\begin{definition}
Let $X$ be a set of nonnegative integers. Its characteristic sequence 
is the sequence $\chi_X=(\chi_X(n))_{n\ge0}$ defined by \[\chi_X(n)= \left \{ \begin{array}{ll}
1,& \text{if}\ n\in X;\\
0,& \text{otherwise}.
\end{array} \right.\]
\end {definition}

The following result is self-evident.

\begin{proposition}\label{pro:aut}
Let $S$ be an abstract numeration system.
A set is $S$-recognizable if and only if its characteristic sequence is $S$-automatic.
\end{proposition}

\begin{definition}
If $\mu$ is a morphism over an alphabet $\Sigma$ and $a$ is a letter in $\Sigma$ such that 
the image $\mu(a)$ begins with $a$, then we say that $\mu$ is {\em prolongable on $a$}.
\end{definition}

If a morphism $\mu$ is prolongable on a letter $a$, then the limit $\lim_{n\to+\infty}\mu^n(a)$
is well defined. As usual, we denote this limit by $\mu^\omega(a)$. 
Furthermore, this limit word is a fixed point of $\mu$. 
Observe that it is an infinite word if and only if there is a letter $b$ occurring in $\mu(a)$ that satisfies
 $\mu^n(b)\neq\varepsilon$ for all non-negative integers $n$.

\begin{definition}
An infinite word is said to be {\em pure morphic} if it can be written as $\mu^\omega(a)$ 
for some morphism $\mu$ prolongable on a letter $a$. It is said to be {\em morphic} 
if it is the image under a morphism of some pure morphic word.
\end{definition}

\begin{theorem}\cite{R,RM} \label{the:RM}
An infinite word is $S$-automatic for some abstract numeration system $S$ if and only if it is morphic.
\end{theorem}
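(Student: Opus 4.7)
The plan is to prove Theorem~\ref{the:RM} by establishing both implications separately: every $S$-automatic word is morphic, and every morphic word is $S$-automatic for some abstract numeration system $S$ built on a regular language.

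For the direction that $S$-automatic implies morphic, suppose $x$ is produced by a DFAO $\mathcal{B} = (Q_B, \Sigma, \delta_B, i_B, \Gamma, \tau)$ driven by $\rep_S(n)$, and let $\mathcal{A}_L = (Q_L, \Sigma, \delta_L, i_L, F_L)$ be a DFA accepting the numeration language $L$. On the alphabet $\Delta = (Q_L \times Q_B) \cup \{\ast\}$, I would define a morphism $\sigma$ by setting $\sigma(\ast) = \ast\,(i_L, i_B)$ and
\[
\sigma((p,q)) = (\delta_L(p,a_1), \delta_B(q,a_1)) \cdots (\delta_L(p,a_\ell), \delta_B(q,a_\ell)),
\]
where $a_1 < \cdots < a_\ell$ enumerates $\Sigma$. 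Then $\sigma$ is prolongable on $\ast$, and $\sigma^\omega(\ast)$ lists the state pairs reached by $\mathcal{A}_L$ and $\mathcal{B}$ when reading the words of $\Sigma^*$ in genealogical order. The morphism $\nu$ that sends $\ast$ and every pair $(p,q)$ with $p \notin F_L$ to the empty word and sends each accepting pair $(p,q)$ to $\tau(q)$ produces exactly $x$, and the result is infinite because $L$ is infinite. Hence $x$ is morphic.

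For the converse, after the standard reduction to the case $x = \psi(y)$ with $y = \mu^\omega(a)$ pure morphic and $\psi$ a coding, the strategy is to construct a regular numeration language $L$ whose $(n+1)$-st word in genealogical order encodes a path to position $n$ of $y$ in the tree of iterated $\mu$-images, together with a DFAO that simulates this descent and outputs $\psi$ of the letter reached. Concretely, writing $\mu(a) = c_0 \cdots c_{s-1}$, the factorization $\mu^k(a) = \mu^{k-1}(c_0) \cdots \mu^{k-1}(c_{s-1})$ places position $n$ in a unique block $\mu^{k-1}(c_{i_1})$, and recursing produces a digit string $i_1 i_2 \cdots$ together with a sequence of letters $a, c_{i_1}, c_{i_1 i_2}, \ldots$ visited along the descent. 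Taking the set of all such digit strings (with a suitable leading-digit or padding convention) as $L$, with the digit alphabet ordered so that genealogical order on $L$ matches the index order on positions of $y$, yields the required numeration system, and the DFAO that reads digits and updates the current letter, outputting $\psi$ of the letter reached, witnesses $S$-automaticity.

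The main obstacle is in the converse direction: one must verify that the encoding language $L$ is regular and that its genealogic ordering corresponds to the natural order on positions of $y$. Non-uniformity of $\mu$ forces the admissible digit at each step to depend on the current letter, so the choice of digit alphabet, ordering, and padding convention must be made with care. Once these combinatorial details are in place, the descent DFAO yields the equivalence immediately.
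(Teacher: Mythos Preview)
The paper does not prove Theorem~\ref{the:RM}; it is quoted from \cite{R,RM} without argument. Your forward direction is exactly the construction the paper itself deploys in Definition~\ref{def:gF} and Lemma~\ref{lem:L} (specialized there to characteristic sequences $\chi_X$, with your $\ast$ playing the role of the paper's $\alpha$ and your erasing $\nu$ the role of $g$), so on that half you and the paper are in complete agreement.

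For the converse, your sketch follows the approach of the cited references: encode positions of $\mu^\omega(a)$ by their addresses in the tree of iterated $\mu$-images, take the set of addresses as the numeration language, and let the DFAO walk down the tree updating the current letter. This is the right idea. The points you flag as needing care---regularity of $L$ and the match between genealogic order on $L$ and position order on $y$---are handled in \cite{RM} by taking the digit alphabet $\{0,\ldots,m-1\}$ with $m=\max_b|\mu(b)|$, building a DFA whose states are the letters of the underlying alphabet with transitions $b\xrightarrow{j}(\text{$(j{+}1)$-th letter of }\mu(b))$, and exploiting prolongability $\mu(a)=a\cdots$ so that suppressing leading zeros corresponds exactly to the address falling in the first block $\mu^{k-1}(a)$; this forces shorter addresses to code earlier positions, which is what makes genealogic order agree with position order. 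One point worth tightening: you invoke the reduction to $\psi$ a coding as ``standard'', but the fact that every morphic word is the image of a pure morphic word under a \emph{letter-to-letter} morphism is itself a theorem (Cobham for the uniform case; Allouche--Shallit or Cassaigne--Nicolas in general) and deserves a citation rather than an appeal to folklore, since a DFAO must output a letter on every input and so cannot directly implement an erasing $\psi$.
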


\begin{example}\label{ex:pansiot}
Consider the morphism $h\colon\{0,1\}^*\to\{0,1\}^*$ defined by $h(1)=1010$ and $h(0)=00$. 
Let $X$ be the set of nonnegative integers whose characteristic sequence is $h^\omega(1)$.
The first element in $X$ are $0,2,6,8,16,18,22,24,40,42,46,\ldots$ From Theorem~\ref{the:RM} 
and Proposition~\ref{pro:aut}, the set $X$ is $S$-recognizable for some abstract numeration system $S$. 
The associated DFAO is depicted in Figure~\ref{fig:nlogn} (details on how to build this DFAO are given in Section~\ref{sec:proof}). 
For all $k\in\N$, we have $|h^k(1)|=(k+1) 2^k$. For all $n\in\N$, there is a unique $k:=k(n)\in\N$ such that 
$(k+1) 2^k\le t_X(n)<(k+2) 2^{k+1}$. Since the number of occurrences of $1$ in the prefix $h^k(1)$ is equal to $2^k$, 
we also have $(k+1) 2^k\le t_X(n)<(k+2) 2^{k+1}\Leftrightarrow 2^k\le n<2^{k+1}$. 
This means that $k(n)=\log_2(n)$. Consequently, $t_X(n)$ is $\Theta(n\log(n))$.
\begin{figure}[htbp]
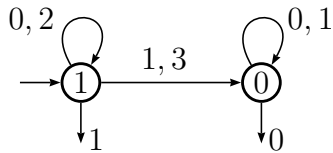

\centering
\VCDraw{%
\begin{VCPicture}{(0,-1.5)(4,2)}
% states
 \State[1]{(0,0)}{1}
 \State[0]{(4,0)}{0}
% initial--final
\Initial[w]{1}%\FinalR{e}{B}{\IOL{}{1}}
\FinalL{s}{1}{1}
\FinalL{s}{0}{0}
% transitions
\LoopN{1}{0,2}
\EdgeL{1}{0}{1,3}
\LoopN[0.75]{0}{0,1}
\end{VCPicture}}
    \caption{A DFAO generating $X$.}
\label{fig:nlogn}
\end{figure}
\end{example}

\section{Proof of the main result}\label{sec:proof}

\begin{definition}
Let  $\mathcal{A}=(Q,\{a_1,\ldots,a_r\},\delta,q_0,F)$ be a DFA whose alphabet is ordered by $a_1<\cdots<a_r$. 
The {\em morphism associated with} $\mathcal{A}$ is the morphism 
$\mu_\mathcal{A}\colon (Q\cup\{\alpha\})^*\to (Q\cup\{\alpha\})^*$ defined by 
\[\forall q\in Q,\ \mu_\mathcal{A}(q)=\delta(q,a_1)\cdots \delta(q,a_r)\ \text{ and }\ 
\mu_\mathcal{A}(\alpha)=\alpha\mu_\mathcal{A}(q_0).\]
where $\alpha$ is a new letter, not belonging to $Q$.
\end{definition}

In \cite{R}, it was shown that any $S$-automatic word is morphic (see Theorem~\ref{the:RM} above). 
Here we only need to consider the particular case where the $S$-automatic word is the characteristic word 
of an $S$-recognizable set of nonnegative integers. We follow the construction of \cite{R} in the following definition.
 Nevertheless, notice that our definitions of $\mathcal A$ and $g$ are slightly different.

\begin{definition}\label{def:gF}
Let $S=(L,\Sigma,<)$ be an abstract numeration system and let $X$ be an $S$-recognizable set of nonnegative integers. 
Let $\mathcal{A}_L$ be the trim minimal automaton of $L$ and let $\mathcal{A}_X$ 
be the (complete) minimal automaton of $\rep_S(X)$. 
Now define $\mathcal A=(Q,\Sigma,\delta,q_0,F)$ to be the product of these two automata 
(take only the accessible part) and consider the canonically associated morphism $\mu_\mathcal{A}$. 
Let $g\colon (Q\cup\{\alpha\})^*\to\{0,1\}^*$ be the morphism defined by $g(\alpha)=g(q_0)$ 
and for all states $(p,q)$ of $\mathcal A$,
\[g(p,q)=
\left \{ \begin{array}{ll}
1, & \text{if}\ p  \text{ is final in } \mathcal{A}_L \text{ and } q \text{ is final in } \mathcal{A}_X;\\
0, & \text{if}\ p  \text{ is final in } \mathcal{A}_L  \text{ and } q \text{ is not final in } \mathcal{A}_X;\\
\varepsilon, & \text{if}\ p  \text{ is not final in } \mathcal{A}_L.
\end{array}\right.\]
Furthermore, for each $n\in\N$, we define $F(n)$ to be the number of occurrences of the letter 
$1$ in $g(\mu_\mathcal{A}^n(\alpha))$.
\end{definition}

The following lemma is self-evident.

\begin{lemma}\label{lem:equiv}
Let $S=(L,\Sigma,<)$ be an abstract numeration system and let $X$ be an $S$-recognizable set of nonnegative integers.
With the notation of Definition~\ref{def:gF}, we have
\[\forall n\in\N,\forall k\in\N,\ |g(\mu_\mathcal{A}^k(\alpha))|\le t_X(n)<|g(\mu_\mathcal{A}^{k+1}(\alpha))|
\Leftrightarrow F(k)\le n<F(k+1).\]
\end{lemma}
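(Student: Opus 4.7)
The plan is to first identify the morphic word $g(\mu_\mathcal{A}^\omega(\alpha))$ with the characteristic sequence $\chi_X$, and then translate the claimed biconditional into a trivial counting statement. A straightforward induction on $k$ shows that
\[
\mu_\mathcal{A}^k(\alpha) \;=\; \alpha\,\mu_\mathcal{A}(q_0)\,\mu_\mathcal{A}^2(q_0)\,\cdots\,\mu_\mathcal{A}^k(q_0),
\]
and, by the very definition of $\mu_\mathcal{A}$ as the morphism associated with a DFA, the factor $\mu_\mathcal{A}^j(q_0)$ lists in genealogic order the states of the product automaton $\mathcal{A}$ reached from $q_0$ by the words of length exactly $j$ over $\Sigma$. (The symbol $\alpha$, via the convention $g(\alpha)=g(q_0)$, accounts for the empty word.)

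Applying $g$ erases every state whose first component is non-final in $\mathcal{A}_L$, i.e., every position corresponding to a word not in $L$, and keeps a $1$ or a $0$ according to whether the remaining word represents an element of $X$ or not. Hence $g(\mu_\mathcal{A}^k(\alpha))$ is precisely the prefix of $\chi_X$ obtained by listing $\chi_X(\val_S(w))$ for all $w\in L$ of length $\le k$, taken in genealogic order. Two equalities follow at once:
\[
|g(\mu_\mathcal{A}^k(\alpha))| \;=\; \mathbf{v}_L(k), \qquad F(k) \;=\; |X\cap [0,\mathbf{v}_L(k))|.
\]

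To finish, I would invoke the defining property of the enumeration function $t_X$: for any integer $M\ge 0$,
\[
t_X(n) < M \iff |X\cap [0,M)| \ge n+1 \iff n < |X\cap [0,M)|.
\]
Setting $M = |g(\mu_\mathcal{A}^{k+1}(\alpha))|$ gives the equivalence $t_X(n) < |g(\mu_\mathcal{A}^{k+1}(\alpha))| \Leftrightarrow n < F(k+1)$, and the contrapositive with $M = |g(\mu_\mathcal{A}^k(\alpha))|$ gives $t_X(n) \ge |g(\mu_\mathcal{A}^k(\alpha))| \Leftrightarrow n \ge F(k)$. Conjoining the two yields exactly the statement of the lemma.

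As the authors note, the lemma is essentially unpacking the definitions, and no genuine obstacle arises. The only point that requires any attention is the verification that the iterates $\mu_\mathcal{A}^k(\alpha)$ list the states of $\mathcal{A}$ reached by the words of length $\le k$ in the correct (genealogic) order with correct multiplicities; once this is in place, the translation between the length/one-count of $g(\mu_\mathcal{A}^k(\alpha))$ and the values $\mathbf{v}_L(k)$, $F(k)$ is immediate, and the biconditional reduces to the tautology that $t_X(n)$ lies in the interval $[\mathbf{v}_L(k),\mathbf{v}_L(k+1))$ iff its rank $n$ lies in $[F(k),F(k+1))$.
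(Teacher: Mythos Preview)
Your argument is correct and supplies exactly the details the paper suppresses: the authors simply declare the lemma ``self-evident'' and give no proof. Your key step---identifying $g(\mu_\mathcal{A}^k(\alpha))$ with the prefix of $\chi_X$ indexed by the words of $L$ of length $\le k$---is precisely what makes the biconditional a tautology about the position of the $(n+1)$-th~$1$ in $\chi_X$.

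Two remarks. First, you actually prove more than is needed here: the identities $|g(\mu_\mathcal{A}^k(\alpha))|=\mathbf v_L(k)$ and $F(k)=|X\cap[0,\mathbf v_L(k))|=\mathbf v_{\rep_S(X)}(k)$ are exactly the content of the paper's next lemma (Lemma~\ref{lem:L}), which the authors prove separately and with somewhat more machinery (an auxiliary automaton $\mathcal B$ and an appeal to \cite{CKR}). For Lemma~\ref{lem:equiv} alone you need only that $g(\mu_\mathcal{A}^k(\alpha))$ is a prefix of $\chi_X$ and that $F(k)$ counts its $1$'s; the precise value $\mathbf v_L(k)$ is irrelevant. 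Second, a small imprecision: since $\mathcal A_L$ is taken \emph{trim} (hence possibly incomplete), $\mu_\mathcal{A}^j(q_0)$ lists the states reached by words of length $j$ that do not fall off the automaton---i.e., words in the language $K$ of Lemma~\ref{lem:L}---rather than by all words in $\Sigma^j$. This does not affect your argument, because such missing words are not in $L$ and would be erased by $g$ anyway.
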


\begin{lemma}\label{lem:L}
Let  $S=(L,\Sigma,<)$ be an abstract numeration system and let $X$ be an $S$-recognizable set of nonnegative integers.
With the notation of Definition~\ref{def:gF}, we have
\[\forall n\in\N,\ |g(\mu_\mathcal{A}^n(\alpha))|=\mathbf v_L(n) \text{ and }  F(n)=\mathbf v_{\rep_S(X)}(n).\]
\end{lemma}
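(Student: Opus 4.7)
The plan is to unfold $\mu_{\mathcal A}^n(\alpha)$ explicitly and identify its letters with the words of $\Sigma^{\le n}$ in genealogic order, via the state each such word reaches from $q_0$ in $\mathcal A$. The morphism $g$ then acts as a filter (with a $0/1$ decoration) encoding membership in $L$ and in $\rep_S(X)$, whence the two claimed equalities fall out by counting.

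The first step is a routine induction on $n$ giving the telescoping identity
\[\mu_{\mathcal A}^n(\alpha)=\alpha\cdot\mu_{\mathcal A}(q_0)\cdot\mu_{\mathcal A}^2(q_0)\cdots\mu_{\mathcal A}^n(q_0),\]
the inductive step following immediately from $\mu_{\mathcal A}(\alpha)=\alpha\,\mu_{\mathcal A}(q_0)$. The second step, a similarly straightforward induction on $k$, shows that if $w_0<w_1<\cdots<w_{r^k-1}$ are the length-$k$ words of $\Sigma^*$ in lexicographic order, then the $i$-th letter of $\mu_{\mathcal A}^k(q_0)$ is exactly $\delta(q_0,w_i)$. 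This is precisely what the definition $\mu_{\mathcal A}(q)=\delta(q,a_1)\cdots\delta(q,a_r)$ is designed for: iterating it enumerates, starting from $q_0$, the states reached by all length-$k$ words in lex order.

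Combining these two facts, the positions of $\mu_{\mathcal A}^n(\alpha)$ (other than the leading $\alpha$) are in bijection with the nonempty words of $\Sigma^{\le n}$ in genealogic order, each labeled by the state reached from $q_0$ in the product automaton $\mathcal A=\mathcal A_L\times\mathcal A_X$ (accessible part); the leading $\alpha$ corresponds to the empty word, since $g(\alpha)=g(q_0)$. By construction of $\mathcal A$, a state $(p,q)$ is reached by a word $w$ iff $\delta_L(p_{0,L},w)=p$ and $\delta_X(q_{0,X},w)=q$, so $p$ being final in $\mathcal A_L$ is equivalent to $w\in L$, and additionally $q$ being final in $\mathcal A_X$ is equivalent to $w\in\rep_S(X)$. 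Therefore $g$ erases exactly the positions corresponding to words outside $L$, writes $0$ at positions for words in $L\setminus\rep_S(X)$, and writes $1$ at positions for words in $\rep_S(X)$. Counting nonerased positions gives $|g(\mu_{\mathcal A}^n(\alpha))|=|L\cap\Sigma^{\le n}|=\mathbf v_L(n)$, and counting the $1$'s gives $F(n)=|\rep_S(X)\cap\Sigma^{\le n}|=\mathbf v_{\rep_S(X)}(n)$.

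There is no real obstacle here; the content is entirely bookkeeping. The only point that demands care is making sure the single letter $\alpha$ is handled consistently with the empty word (so that its $g$-image correctly reflects whether $\varepsilon\in L$ and whether $\varepsilon\in\rep_S(X)$), and that the inductive indexing in step two agrees with the genealogic order used to define $\mathbf v_L$ and $\mathbf v_{\rep_S(X)}$.
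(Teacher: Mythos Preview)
Your approach is correct in spirit and is more self-contained than the paper's: the paper introduces an auxiliary language $K$ (the language of the automaton $\mathcal A_L$ with all states made final), factors $g=h\circ f$, and then invokes an external lemma (\cite[Lemma~28]{CKR}) to identify $|\mu_{\mathcal A}^n(\alpha)|$ with $\mathbf v_K(n)$, from which the two equalities follow by the erasing/marking action of $f$ and $h$. You instead prove the enumeration property directly by the two easy inductions, which is cleaner and avoids the citation.

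There is, however, one point to correct. Your second induction asserts that the letters of $\mu_{\mathcal A}^k(q_0)$ are indexed by \emph{all} $r^k$ words of length $k$. This presumes $\mathcal A$ is complete, but in Definition~\ref{def:gF} the factor $\mathcal A_L$ is the \emph{trim} minimal automaton of $L$, so some transitions may be missing; consequently $\mu_{\mathcal A}$ need not be $r$-uniform. The correct statement is that the letters of $\mu_{\mathcal A}^k(q_0)$, in order, are the states $\delta(q_0,w)$ for those length-$k$ words $w$ along which the run in $\mathcal A_L$ is defined (in lex order)---precisely the words of $K\cap\Sigma^k$ in the paper's notation. Your induction proves exactly this once you drop the assumption that every $\delta(q,a_i)$ exists. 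The remainder of your argument (applying $g$ and counting) is unaffected: $g$ still erases the positions with $p$ non-final in $\mathcal A_L$ and marks the others $0/1$ according to $\mathcal A_X$, yielding $|g(\mu_{\mathcal A}^n(\alpha))|=\mathbf v_L(n)$ and $F(n)=\mathbf v_{\rep_S(X)}(n)$ as claimed.
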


\begin{proof}
Define $K$ to be the language accepted by the trim automaton $\mathcal A_K$ 
whose graph is the same as $\mathcal A_L$ but  where all states are final. 
Observe that the following inclusions hold: $\rep_S(X)\subseteq L \subseteq K$. 
By construction, for all $n\in\N$, the $(n+2)$-th letter of the infinite word
$\mu_\mathcal{A}^\omega(\alpha)$ is the state in $\mathcal A$ reached by reading the $(n+2)$-th word in $K$. 
Define the morphism $f\colon (Q\cup\{\alpha\})^*\to Q^*$ by$f(\alpha)=f(q_0)$ 
and for all states $(p,q)$ of $\mathcal A$,
\[f(p,q)=
\left \{ \begin{array}{ll}
(p,q), & \text{if}\ p  \text{ is final in } \mathcal{A}_L;\\
\varepsilon, & \text{if}\ p  \text{ is not final in } \mathcal{A}_L.
\end{array}\right.\]
Then, for all $n\in\N$, the $(n+1)$-th letter of the infinite word $f(\mu_\mathcal{A}^\omega(\alpha))$ 
is the state in $\mathcal A$ reached by reading the $(n+1)$-th word in $L$. 
Now define the morphism $h\colon Q^*\to \{0,1\}^*$ be the morphism defined by 
\[h(p,q)=
\left \{ \begin{array}{ll}
1, & \text{if } q \text{ is final in } \mathcal{A}_X;\\
0, & \text{if } q \text{ is not final in } \mathcal{A}_X,
\end{array}\right.\]
for all states $(p,q)$ of $\mathcal A$. We clearly have $g=h\circ f$.

We claim that $|\mu_\mathcal{A}^n(\alpha))|=\mathbf v_K(n)$ for all $n\in\N$. 
If so, since $f$ erases the words not belonging to $L$ and $h$ is a letter-to-letter morphism, 
then we obtain $|f(\mu_\mathcal{A}^n(\alpha))|=|g(\mu_\mathcal{A}^n(\alpha))|=\mathbf v_L(n)$ for all $n\in\N$. 
Then, by definition of $F$, we also obtain $F(n)=\mathbf v_{\rep_S(X)}(n)$ for all $n\in\N$.

It is thus sufficient to prove the claim. 
We define a new automaton $\mathcal B=(Q\cup\{\alpha\},\Sigma\cup\{a_0\},\delta',\alpha,Q\cup\{\alpha\})$ 
by slightly modifying the automaton $\mathcal A$. 
The initial state $q_0$ of $\mathcal A$ is no longer initial in $\mathcal B$ and we add a new initial state $\alpha$ 
with a loop labeled by a new letter $a_0$ (not belonging to $\Sigma$). 
All states are final.
The (partial) transition function $\delta'\colon Q\cup\{\alpha\}\times\Sigma\cup\{a_0\}\to Q\cup\{\alpha\}$ 
is defined by $\delta'(\alpha,a)=\delta(q_0,a)$ for all $a\in\Sigma$, $\delta'(\alpha,a_0)=\alpha$, 
and $\delta'(q,a)=\delta(q,a)$ for all $q\in Q$ and all $a\in\Sigma$. 
Observe that $K$ is the language of the words accepted by $\mathcal B$ 
from which we remove the words starting with $\alpha$. 
With the terminology of \cite{CKR}, this automaton $\mathcal B$ is the {\em automaton canonically associated 
with the morphism $\mu_{\mathcal{A}}$ and the letter $\alpha$} and the corresponding {\em directive language} is $K$. 
We thus can apply \cite[Lemma 28]{CKR} to obtain the claim. This completes the proof.
\end{proof}

\begin{lemma}\label{lem:SS}
For all regular languages $L$, there exist $p,c\in\N$ with $p\ge1$ and $\alpha\ge1$ 
such that for all $i\in\{0,\ldots,p-1\}$, we have 
\[\mathbf v_L(np+i)\sim a_in^c\alpha^n \ (n\to+\infty)\]
where $a_0,\ldots,a_{p-1}$ are some positive constants.
\end{lemma}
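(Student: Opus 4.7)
The plan is to combine the classical spectral theory of linear recurrences with the extra structural information furnished by Perron--Frobenius theory for non-negative integer recurrences. Since $L$ is regular, the sequence $(\mathbf u_L(n))_{n\ge 0}$ satisfies a linear recurrence with integer coefficients, and hence so does $(\mathbf v_L(n))_{n\ge 0}$. Factoring its characteristic polynomial over the complex numbers, I would write
\[
\mathbf v_L(n)=\sum_{j} P_j(n)\,\lambda_j^n \qquad\text{for large }n,
\]
with distinct complex roots $\lambda_j$ and complex polynomials $P_j$. Let $\rho=\max_j|\lambda_j|$ be the spectral radius and let $c$ be the maximal degree among those $P_j$ attached to a dominant eigenvalue (one with $|\lambda_j|=\rho$). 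The case of finite $L$ is trivial (take $p=1$, $c=0$, $\alpha=1$, $a_0=|L|$), so I assume $L$ is infinite, which forces $\rho\ge 1$.

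The key structural input is that $(\mathbf u_L(n))_{n\ge 0}$ is $\N$-rational as the counting function of a regular language. Applying Perron--Frobenius to each irreducible diagonal block of the transition matrix of a DFA accepting $L$ (as developed in \cite{berstel-reutenauer}) forces every dominant eigenvalue to have the form $\lambda_j=\rho\zeta_j$ with $\zeta_j$ a root of unity. Let $p\ge 1$ be the least common multiple of the orders of all these $\zeta_j$ and set $\alpha:=\rho^p\ge 1$. For $n=mp+i$ with $0\le i<p$, we have $\zeta_j^{mp}=1$, so $\lambda_j^{mp+i}=\rho^{mp+i}\zeta_j^i$. Collecting the dominant contributions of maximal polynomial degree $c$ (everything else being $o(m^c\alpha^m)$) gives
\[
\mathbf v_L(mp+i)=a_i\,m^c\alpha^m\,(1+o(1))\qquad(m\to+\infty),
\]
where $a_i=\rho^i\,p^c\sum_{j}\ell_j\,\zeta_j^i$, the sum running over dominant $\lambda_j$ with $\deg P_j=c$ and $\ell_j$ denoting the leading coefficient of $P_j$.

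The step I expect to require the most care is proving $a_i>0$ for every $i$. Any non-zero $a_i$ is automatically a positive real, since $\mathbf v_L(mp+i)$ is non-negative, so the task reduces to showing that no $a_i$ vanishes. Not all of them can: the Vandermonde-type matrix $(\zeta_j^i)_{i,j}$ has full rank, so $a_0=\cdots=a_{p-1}=0$ would force every $\ell_j=0$, contradicting the choice of $c$. To propagate positivity from one residue class to all others, I would exploit that $(\mathbf v_L(n))_{n\ge 0}$ is non-decreasing: once some $a_{i_0}>0$, sandwich inequalities of the form $\mathbf v_L((m-1)p+i_0)\le \mathbf v_L(mp+i)\le \mathbf v_L(mp+i_0)$ (or their obvious analogues depending on the relative order of $i$ and $i_0$) pin $\mathbf v_L(mp+i)/(m^c\alpha^m)$ between positive multiples of $a_{i_0}$, ruling out any vanishing $a_i$.
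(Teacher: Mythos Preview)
Your argument is correct and is essentially the content of the result the paper invokes: the paper simply observes that $(\mathbf v_L(n))_{n\ge 0}$ is an $\N$-rational, non-decreasing sequence and then cites \cite[Theorem~II.10.2]{Salomaa-Soittola}, whose proof proceeds exactly along the lines you sketch (Perron--Frobenius giving the root-of-unity structure of the dominant eigenvalues, monotonicity forcing each $a_i>0$). One cosmetic point worth tidying: you apply Perron--Frobenius to the transition matrix of a DFA for $L$, which governs $\mathbf u_L$, whereas your $\rho$ and $\lambda_j$ were defined from the recurrence for $\mathbf v_L$; this is harmless because $\mathbf v_L(n)=\mathbf u_{L\cdot\#^*}(n)$ for a fresh padding symbol~$\#$, so $\mathbf v_L$ is itself the growth function of a regular language and your argument applies to it directly.
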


\begin{proof}
It is well-known that $\sum_{n\ge0}\mathbf v_L(n)$ is an $\N$-rational series for all regular languages $L$ 
(see for instance \cite{berstel-reutenauer}). 
Since $(\mathbf v_L(n))_{n\ge0}$ is a non-decreasing sequence, 
the lemma follows from \cite[Theorem II.10.2]{Salomaa-Soittola}.
\end{proof}

We are ready for the proof of Theorem~\ref{the:main}.

\begin{proof}[Proof of Theorem~\ref{the:main}]
Let $p,q,a_0,\ldots,a_{p-1},b_0,\ldots,b_{q-1},c,d,\alpha,$  and $\beta$ be numbers like in the statement. 
The hypotheses imply $\mathbf v_L(n)=\Theta(n^c(\sqrt[p]{\alpha})^n)$ and 
$\mathbf v_{\rep_S(X)}(n)=\Theta(n^d(\sqrt[q]{\beta})^n)$. 
Consider the notation of Definition~\ref{def:gF}.  
For all $n\in\N$, there exists a unique $k:=k(n)\in\N$ such that we have 
$\ |g(\mu_\mathcal{A}^k(\alpha))|\le t_X(n)<|g(\mu_\mathcal{A}^{k+1}(\alpha))|$. 
From Lemma~\ref{lem:equiv}, this number $k$ is also the only nonnegative integer that satisfies $F(k)\le n<F(k+1)$. 

First consider the case where $\beta>1$.
From Lemma~\ref{lem:L} we obtain $F(k)=\Theta(k^d(\sqrt[q]{\beta})^k)$.
From \cite[Lemma 4.7.14]{CANT} we find
\[k(n)=\frac{1}{\log(\sqrt[q]{\beta})} (\log(n)-d\log(\log(n)))+O(1).\]
Using Lemma~\ref{lem:L}, this gives the announced asymptotic behavior.

Now consider the case where $\beta=1$.
From Lemma~\ref{lem:L} we obtain $F(k)=\Theta(k^d)$. 
This gives $k(n)=\Theta(n^{\frac{1}{d}})$. 
Hence, from Lemma~\ref{lem:L}, the announced asymptotic behavior holds.
If moreover $q=1$, we can be more precise. 
In this case, we have $F(k)=\mathbf v_{\rep_S(X)}(k)\sim bk^d\ (k\to+\infty)$ for some $b$. 
This gives $k(n)\sim (\frac{n}{b})^{\frac{1}{d}}\ (k\to+\infty)$ and from Lemma~\ref{lem:L}, 
the announced asymptotic behavior holds. 
If moreover $\mathbf v_{\rep_S(X)}(n)=bn^d$ for all $n\in\N$ and for some $b$, then
we have $F(k)=bk^d$ for all $k\in\N$. This gives $k(n)=\lfloor \frac{n}{b}^{\frac{1}{d}}\rfloor$.
Therefore, the announced asymptotic behavior holds in this case as well.
This concludes the proof.
\end{proof}

\begin{remark}\label{rem}
Note that the hypotheses of Theorem~\ref{the:main} implies that either $\sqrt[q]{\beta}<\sqrt[p]{\alpha}$ or ($\sqrt[q]{\beta}=\sqrt[p]{\alpha}$ and $d\le c$). Since $\rep_S(X)$ is assumed to be an infinite language, we also always have $\beta\ge 1$. 
\end{remark}

The following corollary is a particular case of Theorem~\ref{the:main}
where $\alpha=\beta^r$ for some $r\ge1$. Recall that $\alpha$ and
$\beta$ are said to be {\em multiplicatively dependent} if $\alpha=\beta^r$ for some non-null $r\in\Q$.

\begin{corollary}\label{cor:dep}
Let  $S=(L,\Sigma,<)$ be an abstract numeration system built on a regular language 
and let $X$ be an infinite $S$-recognizable set of nonnegative integers. Suppose  
\[\forall i\in\{0,\ldots,p-1\},\ \mathbf v_L(n p+i)\sim a_in^c\beta^{r n} \ (n\to+\infty),\]
for some $p,c\in\N$ with $p\ge1$, some $\beta\ge1$, some $r\ge1$, and some positive constants $a_0,\ldots,a_{p-1}$,
and 
\[\forall j\in\{0,\ldots,q-1\},\ \mathbf v_{\rep_S(X)}(n q+j)\sim b_j n^d\beta^n \ (n\to+\infty),\]
for some $q,d\in\N$ with $q\ge1$ and some positive constants $b_0,\ldots,b_{q-1}$.
Then we have 
\begin{itemize}
\item $t_X(n)=\Theta\left( (\log(n))^{c-d r\frac{q}{p}} n^{r\frac{q}{p}}\right)$ if $\beta>1$;
\item $t_X(n)=\Theta\left( n^{\frac{c}{d}}\right)$ if $\beta=1$.
\end{itemize}
\end{corollary}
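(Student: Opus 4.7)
The plan is to derive Corollary~\ref{cor:dep} as a direct specialization of Theorem~\ref{the:main} by taking $\alpha=\beta^r$. First, I would verify that the hypotheses of the corollary match those of the theorem under this choice, which is immediate by rewriting $\beta^{rn}=\alpha^n$ in the asymptotic for $\mathbf v_L$. Once this is established, both conclusions of Theorem~\ref{the:main} apply verbatim, and it only remains to simplify the exponents that appear in them under the relation $\alpha=\beta^r$.

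For the case $\beta>1$, the key computation is
\[\frac{\log(\sqrt[p]{\alpha})}{\log(\sqrt[q]{\beta})}=\frac{(r/p)\log\beta}{(1/q)\log\beta}=\frac{rq}{p},\]
which is well-defined since $\log\beta\neq 0$. Substituting this ratio into the first bullet of Theorem~\ref{the:main} immediately yields $t_X(n)=\Theta((\log n)^{c-drq/p}\,n^{rq/p})$, which is the announced asymptotic.

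For the case $\beta=1$, I would observe that $r\geq 1$ forces $\alpha=\beta^r=1$, so $\sqrt[p]{\alpha}=1$ and the exponential factor $(\sqrt[p]{\alpha})^{\Theta(n^{1/d})}$ appearing in the second bullet of Theorem~\ref{the:main} collapses to $1$. This leaves $t_X(n)=\Theta(n^{c/d})$, as claimed. There is no substantive obstacle: the argument is a plain substitution in the conclusions of Theorem~\ref{the:main}. The only background consistency check worth noting is that Remark~\ref{rem} tacitly requires $\sqrt[q]{\beta}\leq\sqrt[p]{\alpha}$, which here specializes to $p\leq rq$ when $\beta>1$; this is a constraint coming from the inclusion $\rep_S(X)\subseteq L$ rather than an obstruction to the proof itself.
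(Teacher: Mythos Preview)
Your proposal is correct and follows essentially the same approach as the paper: set $\alpha=\beta^r$, invoke Theorem~\ref{the:main}, and simplify the exponent $\log(\sqrt[p]{\alpha})/\log(\sqrt[q]{\beta})$ to $rq/p$ when $\beta>1$, while noting that $\beta=1$ forces $\alpha=1$ so the exponential factor disappears. The paper's proof is terser but makes the same substitution and the same case distinction.
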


\begin{proof}
The case where $\beta>1$ is simply a rewriting of Theorem~\ref{the:main} with $\alpha=\beta^r$ 
and $r=\frac{\log(\alpha)}{\log(\beta)}$. 
For the case $\beta=1$, observe that the hypotheses imply $\beta^r=1$. 
Hence, the conclusion follows directly from Theorem~\ref{the:main} with $\alpha=\beta=1$.
\end{proof}

\section{Achieving particular behaviors}\label{sec:particular}

Recall that a language $L$ is {\em polynomial} if $\mathbf u_L(n)$ is $O(n^d)$ for some $d\in\N$ and 
is {\em exponential} if there exist $c>0$ and $\theta>1$ such that the inequality $\mathbf u_L(n)\ge c\theta^n$ 
holds for infinitely many integers $n$. Let us recall now the following gap theorem.

\begin{theorem}\cite{Sz}
Any regular language is either polynomial or exponential.
\end{theorem}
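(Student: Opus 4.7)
The plan is to analyze $\mathbf{u}_L(n)$ via the transition matrix of the trim minimal DFA $\mathcal{A}=(Q,\Sigma,\delta,q_0,F)$ for $L$. Write $M$ for the $|Q|\times|Q|$ nonnegative integer matrix with $M_{pq}=|\{a\in\Sigma:\delta(p,a)=q\}|$; then $\mathbf{u}_L(n)=\sum_{f\in F}(M^n)_{q_0,f}$. I would decompose the underlying directed graph of $\mathcal{A}$ into strongly connected components, and for each SCC $C$ consider the restricted submatrix $M_C$, which is an irreducible (or $1\times 1$ zero) nonnegative integer matrix with Perron--Frobenius spectral radius $\rho_C\geq 0$.

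I would first establish a three-way classification of SCCs: $\rho_C=0$ (trivial singleton with no self-loop), $\rho_C=1$ (forced to be a simple directed cycle, possibly a single vertex with one self-loop), or $\rho_C>1$. The content of the middle case is that any irreducible nonnegative integer matrix with $\rho_C=1$ is a permutation matrix, because any vertex with in-SCC out-degree $\geq 2$ forces $\rho_C>1$: strong connectivity lets us revisit the branching vertex infinitely often, doubling the count of closed walks through it. With this classification, the proof splits. If every SCC satisfies $\rho_C\leq 1$, each length-$n$ path from $q_0$ to $F$ visits an ordered sequence of SCCs $C_{i_1},\ldots,C_{i_k}$ with $k\leq|Q|$, spending $n_j$ steps in $C_{i_j}$. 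Within a cycle SCC, the number of length-$n_j$ paths between two fixed states is at most $1$ (equal to $1$ iff $n_j$ lies in a prescribed residue class modulo the cycle length). Summing over the finitely many shape patterns and partitions $n=(n_1+\cdots+n_k)+(\text{transitions})$ yields $\mathbf{u}_L(n)=O(n^{|Q|-1})$, so $L$ is polynomial. If instead some SCC $C$ has $\rho_C>1$, pick $p\in C$; since $\mathcal{A}$ is trim, there are paths of fixed lengths $a$ from $q_0$ to $p$ and $b$ from $p$ to some $f\in F$. Gelfand's formula gives $\limsup_m\bigl((M_C^m)_{pp}\bigr)^{1/m}=\rho_C$, so for any $1<\theta<\rho_C$ infinitely many $m$ satisfy $(M_C^m)_{pp}\geq\theta^m$; concatenating yields $\mathbf{u}_L(a+m+b)\geq c\theta^{a+m+b}$ for all such $m$, producing the exponential lower bound on infinitely many $n$.

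The main obstacle will be justifying the classification of $\rho_C=1$ SCCs as cycles in a self-contained way, without invoking the full Perron--Frobenius theory of cyclic eigenvalues; the branching/doubling argument above is the cleanest route I see. Everything else is bookkeeping once the structural picture is in place. Note in particular that taking a $\limsup$ rather than a $\lim$ in the exponential case is essential, because periodicity of the dominant eigenvalues of $M_C$ can force $(M_C^m)_{pp}=0$ for infinitely many $m$; this is exactly why the conclusion is stated only \emph{for infinitely many} $n$ rather than for all sufficiently large $n$.
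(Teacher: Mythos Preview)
The paper does not prove this statement; it is quoted from \cite{Sz} as a known gap theorem and used only as background, with no argument supplied. Your proposal via the SCC decomposition of the trim DFA and the Perron--Frobenius spectral radius is the standard route and is correct: the structural lemma that an irreducible nonnegative \emph{integer} matrix with spectral radius $1$ must be a permutation matrix (so the SCC is a simple directed cycle) is exactly what drives the dichotomy, and your treatment of both the polynomial case (counting compositions of $n$ across a chain of at most $|Q|$ cycle-SCCs) and the exponential case (using $\limsup_m\bigl((M_C^m)_{pp}\bigr)^{1/m}=\rho_C$ to extract infinitely many lengths with $\geq c\theta^n$ words) is sound.
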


\begin{proposition}\label{prop:kl}
\hspace{1cm}
\begin{itemize}
\item For all $k,\ell\in \N$ with $\ell>0$, there exists an abstract numeration system $S$ 
built on an exponential regular language and an infinite $S$-recognizable 
set $X\subseteq\N$ such that $t_X(n)=\Theta((\log(n))^k n^\ell)$.
\item For all $k,\ell\in \N$ with $\ell>1$, there exists an abstract numeration system $S$ 
built on an exponential regular language and an infinite $S$-recognizable 
set $X\subseteq\N$ such that $t_X(n)=\Theta\left(\frac{n^\ell}{(\log(n))^k}\right)$.
\item For all positive integer $k$ and for all abstract numeration systems $S$, there is no $S$-recognizable 
set $X\subseteq\N$ such that $t_X(n)=\Theta\left(\frac{n}{(\log(n))^k}\right)$.
\end{itemize}
\end{proposition}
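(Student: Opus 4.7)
The plan is to apply Theorem~\ref{the:main} in two directions: constructively to realize the behaviors of items 1 and 2, and as an obstruction to rule out the one in item 3. For both constructive parts I use ``stacked star'' regular languages of the form $L = \Sigma_1^* \Sigma_2^* \cdots \Sigma_{m+1}^*$, where the alphabets $\Sigma_i$ are pairwise disjoint and all of the same size $N$. Stars and bars give $\mathbf{u}_L(n) = \binom{n+m}{m} N^n$, and the standard asymptotic $\sum_{i=0}^n i^m r^i \sim \tfrac{r}{r-1} n^m r^n$ (for $r > 1$) then yields $\mathbf{v}_L(n) \sim \tfrac{N}{(N-1)\,m!}\, n^m N^n$. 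So the hypothesis of Theorem~\ref{the:main} holds with $p = 1$, $c = m$, $\alpha = N$ and a positive leading constant; sublanguages with the same stacked shape are treated identically.

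For item 1, given $k \geq 0$ and $\ell \geq 1$, I take $L = \Sigma_1^* \cdots \Sigma_{k+1}^*$ with $|\Sigma_i| = 2^\ell$, pick any two-element subset $\Lambda \subseteq \Sigma_{k+1}$, and set $M := \Lambda^*$, $X := \val_S(M)$. Since $\Lambda^* \subseteq \Sigma_{k+1}^* \subseteq L$, the set $X$ is $S$-recognizable, and $\mathbf{v}_M(n) \sim 2 \cdot 2^n$. Applying Theorem~\ref{the:main} with $p = q = 1$, $c = k$, $d = 0$, $\alpha = 2^\ell$, $\beta = 2$ gives $t_X(n) = \Theta((\log n)^k n^\ell)$, and $L$ is evidently exponential. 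For item 2, given $k \geq 0$ and $\ell \geq 2$, I set $c := k(\ell-1)$, take $L = \Sigma_1^* \cdots \Sigma_{c+1}^*$ with $|\Sigma_i| = 2^\ell$, and for each $i = 1, \ldots, k+1$ pick a two-element subset $\Lambda_i \subseteq \Sigma_i$ (possible because $k+1 \leq c+1$ when $\ell \geq 2$, and $2^\ell \geq 2$). Setting $M := \Lambda_1^* \Lambda_2^* \cdots \Lambda_{k+1}^*$ and $X := \val_S(M)$, one has $M \subseteq L$ and $\mathbf{v}_M(n) \sim (2/k!)\, n^k 2^n$, so Theorem~\ref{the:main} with $c = k(\ell-1)$, $d = k$, $\alpha = 2^\ell$, $\beta = 2$ delivers $t_X(n) = \Theta((\log n)^{c - d\ell} n^\ell) = \Theta(n^\ell / (\log n)^k)$.

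For item 3, I argue by contradiction. Suppose some infinite $S$-recognizable $X$ satisfies $t_X(n) = \Theta(n / (\log n)^k)$ for some $k \geq 1$. Lemma~\ref{lem:SS} applied to $L$ and to $\rep_S(X)$ produces parameters $(p, c, \alpha)$ and $(q, d, \beta)$ fitting the hypothesis of Theorem~\ref{the:main}. In the case $\beta > 1$, matching $(\log n)^{c - d\rho} n^\rho$ against $n / (\log n)^k$, with $\rho := \log \sqrt[p]{\alpha} / \log \sqrt[q]{\beta}$, forces $\rho = 1$ and $d - c = k \geq 1$; but $\rho = 1$ means $\sqrt[p]{\alpha} = \sqrt[q]{\beta}$, and Remark~\ref{rem} then gives $d \leq c$, a contradiction. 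In the case $\beta = 1$, infiniteness of $X$ forces $d \geq 1$, and $t_X(n) = \Theta(n^{c/d} (\sqrt[p]{\alpha})^{\Theta(n^{1/d})})$ is either super-polynomial (if $\sqrt[p]{\alpha} > 1$) or the pure power $\Theta(n^{c/d})$ (if $\sqrt[p]{\alpha} = 1$, which forces $c \geq 1$ for $L$ to be infinite); neither can equal $\Theta(n / (\log n)^k)$ for $k \geq 1$, since the ratio $n^{c/d - 1}(\log n)^k$ is then never pinned between two positive constants.

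The main obstacle is not conceptual but computational: I have to verify that the sharper $\sim$-form hypothesis of Theorem~\ref{the:main} really holds for the stacked-star $L$ and $M$, with positive leading constants and no periodic modulation, so that $p = q = 1$. This ultimately reduces to the routine identity $\sum_{i=0}^n i^m r^i \sim \tfrac{r}{r-1} n^m r^n$, after which the three items fall out mechanically.
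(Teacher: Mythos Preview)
Your proof is correct and follows the same strategy as the paper's: apply Theorem~\ref{the:main} constructively for items 1 and 2, and use it together with Remark~\ref{rem} as an obstruction for item 3. The only differences are presentational: where the paper invokes \cite[Proposition~17]{Rigo} to supply regular languages with $\mathbf{v}_L(n)\sim n^c\,2^{\ell n}$ and sublanguages $M^{(d)}$ with $\mathbf{v}_{M^{(d)}}(n)\sim n^d\,2^n$, you build them explicitly as stacked-star languages $\Sigma_1^*\cdots\Sigma_{m+1}^*$ over disjoint alphabets (your choice $(c,d)=(k(\ell-1),k)$ in item~2 being the instance $d=k$ of the paper's $d\ge k/(\ell-1)$, $c=\ell d-k$), and in item~3 you spell out the exclusion of the case $\beta=1$ that the paper leaves implicit.
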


\begin{proof}
Let $\ell$ be a positive integer and let $k\in\N$.

From \cite[Proposition 17]{Rigo}, for all $c\in\N$, there exists a regular language 
$L$ having $\mathbf v_L(n)=(n+1)^c 2^{\ell(n+1)}$ as growth function (for $n\ge1$).
This language $L$ is obtained by considering unions and shuffles of regular languages over distinct alphabets, 
i.e, alphabets having  empty pairwise intersections. 
In particular, for all nonnegative integers $b\le c$, it contains a regular sublanguage $M^{(b)}$ 
such that $\mathbf u_{M^{(b)}}(n)=n^b2^n$ for all $n\in\N$.

Let $S=(L,\Sigma,<)$ be an abstract numeration system built 
on a regular language $L$  satisfying $\mathbf v_L(n)\sim n^k 2^{\ell(n+1)}$ ($n\to+\infty$). 
This language is defined as in the previous paragraph. 
Then the set $X=\val_S(M^{(0)})$ is an infinite $S$-recognizable set
such that $\mathbf v_{\rep_S(X)}(n)\sim 2^{n+1}$ $(n\to+\infty)$. 
From Theorem~\ref{the:main}, we obtain $t_X(n)=\Theta((\log(n))^k n^\ell)$. This proves the first assertion.

Now we assume $\ell\ge 2$.
Choose any integer $d\ge\frac{k}{\ell-1}$ and let $c=\ell d-k$. From the choice of $d$, we have $c\ge d$. 
Let $S=(L,\Sigma,<)$ be an abstract numeration system built 
on a regular language $L$   satisfying $\mathbf v_L(n)\sim n^c 2^{\ell(n+1)}$ ($n\to+\infty$).
This language is defined as in the first paragraph. 
Then the set $X=\val_S(M^{(d)})$ is an infinite $S$-recognizable set
such that $\mathbf v_{\rep_S(X)}(n)\sim n^d 2^{n+1}$ $(n\to+\infty)$. 
From Theorem~\ref{the:main}, we obtain $t_X(n)=\Theta\left(\frac{n^\ell}{(\log(n))^k}\right)$.
This proves the second assertion.

Consider now the third assertion and assume $k>0$. Let $S=(L,\Sigma,<)$ be an abstract numeration system. 
Suppose that such a set $X$ exists.
In view of  Lemma~\ref{lem:SS} and Theorem~\ref{the:main}, we should have
\[\forall j\in\{0,\ldots,q-1\},\ \mathbf v_{\rep_S(X)}(nq+j)\sim b_jn^d\beta^n \ (n\to+\infty),\]
for some $q,d\in\N$ with $q\ge1$, some $\beta>1$ and some positive constants $b_0,\ldots,b_{q-1}$.
In the same way, we can write 
\[\forall i\in\{0,\ldots,p-1\},\ \mathbf v_L(np+i)\sim a_i n^c\alpha^n \ (n\to+\infty),\]
for some $p,c\in\N$ with $p\ge1$, some $\alpha>1$ and some positive constants $a_0,\ldots,a_{p-1}>0$. 
Then we must have $\sqrt[p]{\alpha}=\sqrt[q]{\beta}$. 
From Theorem~\ref{the:main} and Remark~\ref{rem}, 
we then obtain $t_X(n)=\Theta((\log(n))^{c-d}n)$ with $c\ge d$, a contradiction. This ends the proof.
\end{proof}

The following corollary of Theorem~\ref{the:main} considers the case of a polynomial numeration language.

\begin{corollary}\label{cor:poly}
Let $S=(L,\Sigma,<)$ be an abstract numeration system built on a polynomial regular language 
and let $X$ be an infinite $S$-recognizable set of nonnegative integers.
Then we have $t_X(n)=\Theta( n^r)$ for some rational number $r\ge1$.
\end{corollary}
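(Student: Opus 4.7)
The plan is to reduce Corollary~\ref{cor:poly} to the $\beta=1$ case of Theorem~\ref{the:main}, using Lemma~\ref{lem:SS} to extract asymptotics for both $\mathbf v_L$ and $\mathbf v_{\rep_S(X)}$, and then exploiting polynomial growth to kill the exponential parts.

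First, I would apply Lemma~\ref{lem:SS} to $L$ to obtain $p,c\in\N$ with $p\ge1$, a real $\alpha\ge 1$, and positive constants $a_0,\dots,a_{p-1}$ such that $\mathbf v_L(np+i)\sim a_i n^c \alpha^n$ for each $i$. Because $L$ is polynomial, $\mathbf u_L(n)=O(n^D)$ for some $D$, so $\mathbf v_L(n)=O(n^{D+1})$; this polynomial growth is incompatible with $\alpha>1$, forcing $\alpha=1$ and hence $\mathbf v_L(np+i)\sim a_i n^c$. Since $\rep_S(X)\subseteq L$ is a sublanguage of a polynomial regular language, it is itself polynomial, and the same reasoning applied to $\rep_S(X)$ yields $q,d\in\N$ with $q\ge1$, $\beta=1$, and positive constants $b_0,\dots,b_{q-1}$ such that $\mathbf v_{\rep_S(X)}(nq+j)\sim b_j n^d$.

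Next I would verify $d\ge 1$. If $d=0$, then on each residue class mod $q$ the sequence $\mathbf v_{\rep_S(X)}(nq+j)$ would tend to a constant; as $\mathbf v_{\rep_S(X)}$ is non-decreasing, it would be bounded, contradicting the infiniteness of $X$. We may now feed these asymptotics into Theorem~\ref{the:main} in the $\beta=1$ regime to obtain
\[
t_X(n)=\Theta\!\left(n^{c/d}\,(\sqrt[p]{\alpha})^{\Theta(n^{1/d})}\right),
\]
and since $\alpha=1$ the exponential factor collapses to $1$, giving $t_X(n)=\Theta(n^{c/d})$ with $r:=c/d\in\Q$.

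The remaining step is to confirm $r\ge 1$. From $\rep_S(X)\subseteq L$ we have $\mathbf v_{\rep_S(X)}(n)\le \mathbf v_L(n)$ for every $n$, and comparing $\Theta(n^d)\le O(n^c)$ (over all joint residue classes mod $\mathrm{lcm}(p,q)$) forces $d\le c$; combined with $d\ge 1$ and $c,d\in\N$, this delivers $r=c/d\ge 1$. The main point deserving care is this comparison across the two different moduli $p$ and $q$ (essentially the content of Remark~\ref{rem} specialized to our polynomial setting); once it is established, everything else is a direct reading of Theorem~\ref{the:main}.
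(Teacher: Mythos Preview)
Your proposal is correct and follows essentially the same route as the paper: apply Lemma~\ref{lem:SS} to both $L$ and $\rep_S(X)$, use polynomial growth to force $\alpha=\beta=1$, then invoke the $\beta=1$ case of Theorem~\ref{the:main} to obtain $t_X(n)=\Theta(n^{c/d})$ with $d\le c$ coming from $\rep_S(X)\subseteq L$. You are somewhat more explicit than the paper in checking $d\ge 1$ and in justifying the inequality $d\le c$ across the two moduli, but the underlying argument is the same.
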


\begin{proof}
Since $L$ is an infinite polynomial regular language, its growth function $\mathbf v_L(n)$ must satisfy 
\[\forall i\in\{0,\ldots,p-1\},\ \mathbf v_L(np+i)\sim a_in^c \ (n\to+\infty),\]
for some $p,c\in\N$ with $p\ge1$ and some positive constants $a_0,\ldots,a_{p-1}$.
The sublanguage $\rep_S(X)$ of $L$ is necessarily polynomial too, and since $X$ is an infinite $S$-recognizable set, 
the growth function $\mathbf v_{\rep_S(X)}(n)$ must satisfy
\[\forall j\in\{0,\ldots,q-1\},\ \mathbf v_{\rep_S(X)}(nq+j)\sim b_jn^d\ (n\to+\infty),\]
for some $q,d\in\N$ with $q\ge1$ and $d\le c$, and some positive constants $b_0,\ldots,b_{q-1}$.
Then from Theorem~\ref{the:main}, we obtain $t_X(n)=\Theta( n^\frac{c}{d})$.
\end{proof}

\begin{proposition}
For every rational number $r\ge1$, 
there exists an abstract numeration system $S$ 
built on a polynomial regular language and an infinite $S$-recognizable 
set of nonnegative integers $X$ such that $t_X(n)=\Theta( n^r)$.
\end{proposition}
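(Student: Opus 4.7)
The plan is to apply Corollary~\ref{cor:poly} to a concrete numeration system whose language and whose selected sublanguage have prescribed polynomial growth exponents matching $r$. Given $r\in\Q$ with $r\ge 1$, I write $r=c/d$ with positive integers $c\ge d\ge 1$. Over the ordered alphabet $\Sigma=\{a_1,\ldots,a_c\}$ with $a_1<\cdots<a_c$, I would take the numeration language $L = a_1^* a_2^* \cdots a_c^*$, which is regular. Counting words of length $n$ in $L$ amounts to counting compositions of $n$ into $c$ nonnegative parts, so $\mathbf{u}_L(n)=\binom{n+c-1}{c-1}$, and the hockey-stick identity gives $\mathbf{v}_L(n)=\binom{n+c}{c}\sim\frac{n^c}{c!}$. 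In particular, $L$ is a polynomial regular language whose counting function satisfies the hypothesis of Corollary~\ref{cor:poly} with $p=1$.

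Next, I would take the regular sublanguage $K = a_1^* a_2^* \cdots a_d^*\subseteq L$, where the inclusion uses $d\le c$. The same counting argument gives $\mathbf{v}_K(n)\sim\frac{n^d}{d!}$. Setting $S=(L,\Sigma,<)$ and $X=\val_S(K)\subseteq\N$, the set $\rep_S(X)=K$ is regular, so $X$ is an infinite $S$-recognizable set, and its representations fulfill the hypothesis of Corollary~\ref{cor:poly} with $q=1$.

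Applying Corollary~\ref{cor:poly} to these data immediately yields $t_X(n)=\Theta(n^{c/d})=\Theta(n^r)$, as desired. The construction presents no real obstacle; the only point worth a moment's attention is checking that both growth functions fall under the corollary with a single residue class ($p=q=1$), which is immediate from the explicit binomial formulae above.
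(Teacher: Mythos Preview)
Your construction is exactly the one in the paper: take $L=a_1^*\cdots a_c^*$, $K=a_1^*\cdots a_d^*\subseteq L$, and $X=\val_S(K)$, using the binomial counts $\mathbf v_L(n)=\binom{n+c}{c}$ and $\mathbf v_K(n)=\binom{n+d}{d}$. The only quibble is that you should invoke Theorem~\ref{the:main} (with $\alpha=\beta=1$, $p=q=1$) rather than Corollary~\ref{cor:poly}, since the corollary as stated only asserts $t_X(n)=\Theta(n^r)$ for \emph{some} rational $r\ge 1$ without identifying it as $c/d$.
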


\begin{proof}
Fix a rational number $r\ge1$. Write $r=\frac{c}{d}$ where $c$ and $d$ are positive integers. 
Define $\mathcal B_\ell$ to be the {\em bounded language} $a_1^*a_2^*\cdots a_\ell^*$. 
We have $\mathbf v_{\mathcal B_\ell}(n)=\binom{n+\ell}{\ell}$ for all $\ell\ge1$ and $n\in\N$ 
(for example see \cite[Lemma 1]{CRS}). 
Let $S$ be the abstract numeration system built on $\mathcal B_c$ with the order $a_1<a_2<\cdots<a_c$ 
and let $X=\val_S(\mathcal B_d)$ (since $c\ge d$, we have $\mathcal B_d\subseteq \mathcal B_c$). 
Hence we have $\mathbf v_{\mathcal B_c}(n)=\binom{n+c}{c}$ and $\mathbf v_{\rep_S(X)}(n)=\binom{n+d}{d}$ 
for all $n\in\N$. Then from Theorem~\ref{the:main}, we obtain $t_X(n)=\Theta( n^\frac{c}{d})=\Theta( n^r)$.
\end{proof}

\section{Examples}\label{sec:examples}

In this section we provide several examples to illustrate the constructions of Sections~\ref{sec:proof} and~\ref{sec:particular}.

\begin{example}\label{ex:pansiot2}
Let us continue Example~\ref{ex:pansiot}. 
Consider the abstract numeration system $S$ built on the language $L$ accepted by the automaton of Figure~\ref{fig:nlogn} 
from which are removed the words beginning with $0$ and the alphabet order $0<1<2<3$. 
The trim minimal automaton of $L$ is depicted in Figure~\ref{fig:nlogn2}.
\begin{figure}[htbp]
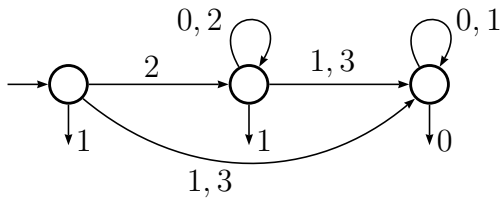

\centering
\VCDraw{%
\begin{VCPicture}{(-4,-2)(4,2)}
% states
 \State{(0,0)}{1}
 \State{(4,0)}{0}
\State{(-4,0)}{i}
% initial--final
\Initial[w]{i}%\FinalR{e}{B}{\IOL{}{1}}
\FinalL{s}{1}{1}
\FinalL{s}{0}{0} 
\FinalL{s}{i}{1}
% transitions
\LoopN{1}{0,2}
\EdgeL{1}{0}{1,3}
\LoopN[0.75]{0}{0,1}
\VArcR{arcangle=-40}{i}{0}{1,3}
\EdgeL{i}{1}{2}
\end{VCPicture}}
    \caption{The trim minimal automaton of $L$.}
\label{fig:nlogn2}
\end{figure}
The set $X$ is $S$-recognizable since $\rep_S(X)=2\{0,2\}^*\cup\{\varepsilon\}$. 
We have $\mathbf u_L(n)=(n+2)2^{n-1}$ if $n\ge1$ and $\mathbf u_L(0)=1$ 
and $\mathbf u_{\rep_S(X)}(n)=2^{n-1}$ if $n\ge1$ and $\mathbf u_{\rep_S(X)}(0)=1$. 
This gives $\mathbf v_L(n)=(n+1)2^n$ and $\mathbf v_{\rep_S(X)}(n)=2^n$ for all $n\in\N$.
Observe that accordingly to Lemma~\ref{lem:L}, we have $|h^n(1)|=\mathbf v_L(n)=(n+1)2^n$ 
and $F(n)=\mathbf v_{\rep_S(X)}(n)=2^n$ for all $n\in\N$, where $F(n)$ is the number of occurrences of the letter $1$ in 
$|h^n(1)|$. Recall that we found \[t_X(n)=\Theta(n\log(n)),\] which is consistent with Theorem~\ref{the:main}.
\end{example}

\begin{example}
Consider the base $4$ numeration system, that is, the abstract numeration system built on 
$\mathcal L_4=\{\varepsilon\}\cup\{1,2,3\}\{0,1,2,3\}^*$ with the natural order on the digits.
Let $X=\val_4(L)=\{0,2,8,10,32,34,40,42,128,130,136,138,160,\ldots\}$ where $L$ is the language accepted by the automaton of Figure~\ref{fig:nlogn2}.  It is $4$-recognizable. From Example~\ref{ex:pansiot2} and from Theorem~\ref{the:main}, we obtain 
\[t_X(n)=\Theta\left(\left(\frac{n}{\log(n)}\right)^2\right).\]
\end{example}

It is well known that the set of squares $\{n^2\mid n\in\N\}$ 
is not $b$-recognizable for all integer bases $b\ge2$  (for instance see \cite{Ei}). 
In \cite{Strogalov,Rigo} (also see Theorem~\ref{the:construction} below), 
it was shown that any set of the form $\{n^k\mid n\in\N\}$, with $k\in\N$, is $S$-recognizable for some $S$. 
In those constructions, the exhibited abstract numeration systems are built on polynomial languages.
In the following example, we exhibit $4$-recognizable sets of nonnegative integers having their $n$-th terms in $\Theta(n^k)$ for $k=2$ and some $k\not\in\N$. These considerations have to be compared with Proposition~\ref{prop:kl} above and  Proposition~\ref{prop:poly} in the next section.

\begin{example}
Consider again the base $4$ numeration system.
Let $X=\val_4(\{1,3\}^*)=\{1,3,5,7,13,15,21,23,29,31,\ldots\}$. 
It is clearly $4$-recognizable. We have $\mathbf v_{\mathcal L_4}(n)=4^n$ and 
$\mathbf v_{\{1,3\}^*}(n)=2^{n+1}-1$ for all $n\in\N$. 
From Theorem~\ref{the:main}, we obtain \[t_X(n)=\Theta(n^2).\]
This also illustrates Proposition~\ref{prop:kl}.

Now let $Y=\val_4(\{1,2,3\}^*)=\{0,1,3,5,6,7,9,10,11,13,14,15,21,22,\ldots\}$. 
It is clearly $4$-recognizable. We have $\mathbf v_{\{1,2,3\}^*}(n)=\frac{1}{2}(3^{n+1}-1)$ for all $n\in\N$. 
From Theorem~\ref{the:main}, we obtain \[t_Y(n)=\Theta\left(n^{\frac{\log(4)}{\log(3)}}\right).\]  
This also illustrates Corollary~\ref{cor:dep}.

Define $L_F=\{\varepsilon\}\cup1(0+01)^*$ to be the language of the Fibonacci numeration system 
and let $Z=\val_4(L_F)=\{0,1,4,16,17,64,70,256,257,260,272,273,1024,\ldots\}$. Again it is $4$-recognizable. 
We have $\mathbf v_{L_F}(n)=\frac{5+3\sqrt{5}}{10}(\frac{1+\sqrt{5}}{2})^n
+\frac{5-3\sqrt{5}}{10}(\frac{1-\sqrt{5}}{2})^n$ for all $n\in\N$. 
Therefore, we find 
\[\mathbf v_{L_F}(n)\sim\frac{5+3\sqrt{5}}{10}\left(\frac{1+\sqrt{5}}{2}\right)^n\ (n\to+\infty)\] 
and by Theorem~\ref{the:main}, we obtain \[t_Z(n)=\Theta\left(n^{\frac{\log(4)}{\log(\varphi)}}\right)\]
where $\varphi=\frac{1+\sqrt{5}}{2}$ is the golden ratio.

Now we illustrate Theorem~\ref{the:main} when $p=1$ and $q=2$.
Define $K$ to be the language accepted by the automaton depicted in Figure~\ref{fig:K}. 
\begin{figure}[htbp]
\centering
\VCDraw{%
\begin{VCPicture}{(0,-1.5)(4,1.5)}
% states
 \State{(0,0)}{0}
 \State{(4,0)}{1}
% initial--final
\Initial[w]{0}
\Final[s]{0}
\Final[s]{1}
% transitions
\ArcL[.5]{0}{1}{1,2,3}
\ArcL[.5]{1}{0}{0,2}
\end{VCPicture}}
    \caption{The trim minimal automaton of $K$.}
\label{fig:K}
\end{figure} 
Let $V=\val_4(K)$. The first values of $V$ are $0, 1, 2, 3, 4, 6, 8, 10, 12, 14, 17, 18, 19, 25, 26,27, 33$.
It is $4$-recognizable. We have $\mathbf u_K(2n)=6^n$ and  $\mathbf u_K(2n+1)=3\cdot 6^n$ for all $n\in\N$. Then $\mathbf v_K(2n)\sim\frac{9}{5} 6^n$ and $\mathbf v_K(2n+1)\sim\frac{24}{5} 6^n$ $(n\to+\infty)$. From Theorem~\ref{the:main}, we obtain 
\[t_V(n)=\Theta\left(n^{\frac{\log(4)}{\log(\sqrt{6})}}\right).\]
\end{example}

\begin{example}
Consider the base $2$ numeration system, that is, the abstract numeration system built on 
$\mathcal L_2=\{\varepsilon\}\cup 1\{0,1\}^*$ with the natural order on the digits.
Let $X=\val_2(1^*0^*)=\{0,1,2,4,6,7,8,12,15,16,24,28,30,31,\ldots\}$. It is $2$-recognizable.
We have $\mathbf v_{1^*0^*}(n)=\binom{n+2}{2}$ for all $n\in\N$. From Theorem~\ref{the:main}, we obtain
\[t_X(n)=2^{(1+o(1))\sqrt{2n}}.\]
\end{example}

We can also use our main result to show that certain sets of integers
are not $S$-recognizable for any abstract numeration system $S$.

\begin{example}
Let $C = (C_n)_{n \geq 0}$ denote the set of \emph{Catalan numbers} \cite{A000108}:
i.e.,
\[
C_n = \frac{1}{n+1}{2n \choose n}.
\]
These numbers occur in many counting problems; for example, they count
the number of \emph{Dyck words} of length $2n$.  Asymptotically, we
have
\[
C_n \sim \frac{4^n}{n^{3/2}\sqrt{\pi}},
\]
which does not correspond to any of the forms described by
Theorem~\ref{the:main}.  Hence, for all $S$, the set $C$ is not
$S$-recognizable.
\end{example}

\section{Additional results}\label{sec:additional}

Ultimately periodic sets of integers play a special role. 
On the one hand, such infinite sets are coded by a finite amount of information. 
On the other hand, the famous theorem of Cobham asserts that these sets are 
the only ones that are recognizable in all integer base numeration systems. 
The following result shows that this property extends to abstract numeration systems.

\begin{theorem}\cite{LR}\label{the:ult_per}
Any ultimately periodic set is $S$-recognizable for all abstract numeration systems $S$ built on a regular language.
\end{theorem}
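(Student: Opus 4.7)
The plan is to reduce the theorem to showing, for each integer $p\ge1$ and each $r\in\{0,\ldots,p-1\}$, that the arithmetic progression $p\N+r$ is $S$-recognizable. Any ultimately periodic subset of $\N$ is a Boolean combination of finitely many such progressions together with a finite set, and regular languages are closed under finite union, intersection, and complement, so this reduction suffices (a finite subset of $\N$ is trivially $S$-recognizable, since its representation language is finite).

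To show that $\rep_S(p\N+r)$ is regular, let $\mathcal A_L=(Q,\Sigma,\delta,q_0,F_L)$ be the trim minimal DFA of $L$, and for each $q\in Q$ and $n\ge0$ let $u_q(n)$ be the number of words of length $n$ leading from $q$ to $F_L$. The recurrence $u_q(n+1)=\sum_{a\in\Sigma}u_{\delta(q,a)}(n)$ shows that the vector $(u_q(n))_{q\in Q}$ evolves linearly, so its reduction modulo $p$ lives in the finite set $(\Z/p\Z)^Q$ and is therefore eventually periodic: there exist $N_0,P\ge 0$ with $u_q(n)\equiv u_q(n+P)\pmod p$ for all $q\in Q$ and all $n\ge N_0$. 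For $w=w_1\cdots w_n\in L$, the genealogic rank decomposes as
\[
\val_S(w)=\mathbf v_L(n-1)+\sum_{i=0}^{n-1}\sum_{\substack{a\in\Sigma\\ a<w_{i+1}}}u_{\delta(q_i,a)}(n-i-1),\quad q_i:=\delta(q_0,w_1\cdots w_i),
\]
and modulo $p$ the summand $u_{\delta(q_i,a)}(n-i-1)$ depends only on $q_i$, $a$, and $(n-i-1)\bmod P$ as soon as $n-i-1\ge N_0$.

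To realize the computation of $\val_S(w)\bmod p$ as a finite-state machine, I would case-split on $c:=|w|\bmod P$. For each $c$, I would build a DFA $\mathcal D_c$ whose states record: the current state in $\mathcal A_L$, the prefix length modulo $P$, a $\Z/p\Z$-accumulator, and a sliding pair consisting of the state reached $N_0$ steps in the past together with a buffer of the $N_0$ most recent input letters. As each new letter is read, the delayed state is advanced one step by the oldest letter leaving the buffer, and the accumulator is incremented by the contribution of that departing position, evaluated in the eventually periodic regime (which is legitimate under the working assumption $|w|\equiv c\pmod P$, as $(n-i-1)\bmod P$ is then a known function of $i\bmod P$ and $c$). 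Upon reaching the end of the input, the automaton uses the delayed state, the buffer, and the value of $c$ to complete the last $N_0$ summands with the exact values $u_{q'}(0),\ldots,u_{q'}(N_0-1)$, adds $\mathbf v_L(n-1)\bmod p$ (a quantity depending only on $c$), and accepts iff the result equals $r$, the current state lies in $F_L$, and the length modulo $P$ is $c$. Taking the union of the $\mathcal D_c$ for $c\in\{0,\ldots,P-1\}$, together with the finite (hence regular) set of words in $L$ of length less than $2N_0$ handled directly, yields a DFA for $\rep_S(p\N+r)$.

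The main obstacle is the apparent non-locality of the rank formula: the contribution at position $i$ involves $u_{q'}(n-i-1)$, a quantity depending on the total length $n$, which is not known while reading left-to-right. The case-split on $|w|\bmod P$ pins down the residue of $n$, and the finite look-behind buffer handles the $N_0$ tail positions for which the eventually periodic formula has not yet kicked in; together these ingredients turn a length-dependent rank into a genuine finite-state computation.
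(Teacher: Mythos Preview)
The paper does not prove this theorem at all: it is stated with a citation to \cite{LR} (Lecomte--Rigo) and immediately followed by a remark about a multidimensional extension. There is therefore nothing in the paper to compare your argument against.

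As for the argument itself, your approach is essentially the classical one from the cited reference, and it is sound. The reduction to arithmetic progressions via closure of regular languages under Boolean operations is correct (using that $\rep_S$ is a bijection onto $L$, so $\rep_S(\N\setminus X)=L\setminus\rep_S(X)$, etc.). The rank formula you wrote is the standard one, and the eventual periodicity of $(u_q(n)\bmod p)_{q\in Q}$ follows exactly as you say. The case-split on $|w|\bmod P$ together with the $N_0$-letter look-behind buffer correctly turns the length-dependent contributions into a finite-state computation: while reading the $(j{+}1)$-th letter you process position $j-N_0$, whose index $n-(j-N_0)-1=n-j+N_0-1\ge N_0$ lies in the periodic regime since $j\le n-1$, and the residue of that index modulo $P$ is determined by $c$ and $j\bmod P$.

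Two small points worth tightening. First, when you say $\mathbf v_L(n-1)\bmod p$ ``depends only on $c$'', this is true only past its own pre-period; you should take $P$ to be a common period and $N_0$ a common pre-period for both the vector $(u_q(\cdot)\bmod p)_q$ and $\mathbf v_L(\cdot)\bmod p$ (the latter is eventually periodic because it is a running sum of an eventually periodic sequence modulo $p$). Second, your threshold ``length less than $2N_0$'' for the directly-handled words is safe but unmotivated; any fixed bound at least $N_0$ that also covers the pre-period of $\mathbf v_L$ would do, and it would be cleaner to say so explicitly.
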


In \cite{CLR}, the latter result is extended to the multidimensional case.

Ultimately periodic sets are polynomial sets of degree $1$. The following example shows that there exists 
non-ultimately periodic polynomial sets of degree $1$ that are recognized in some abstract numeration systems.

\begin{example}
Consider the integer base $2$ numeration system. 
Let $t=(t_n)_{n\ge0}\in\{0,1\}^{\N}$ be the Thue-Morse sequence defined as follows: 
$t_n=(s_2(n)\mod 2)$ where $s_2(n)$ is the number of $1$'s in the $2$-representation $\rep_{S_2}(n)$ of $n$. 
Let $T\subseteq\N$ be the characteristic set of $t$: $n\in T$ if and only if $t_n=1$. 
It is well-known that $t$ is $2$-automatic, which is equivalent to the fact that $T$ is $S_2$-recognizable.
It is easily seen that the growth function $t_T(n)$ of $T$ is bounded by $2n$.
Furthermore, $T$ is not an ultimately periodic set.
\end{example}

\begin{theorem}\cite{Rigo}\label{the:construction}
Let $k$ be a positive integer and for all $i\in\{1,\ldots,k\}$, let $P_i\in\Q[x]$ be such that $P_i(\N)\subseteq\N$ 
and let $\alpha_i\in\N$. Define $f(x)=\sum_{i=1}^kP_i(x)\alpha_i^x$. 
Then an abstract numeration system $S$ built on a regular language $L$ such that $f(\N)$ is $S$-recognizable 
can be effectively provided. Furthermore, if $f(\N)$ is polynomial (resp. exponential), 
then the numeration language of the provided abstract numeration system  is polynomial (resp. exponential).
\end{theorem}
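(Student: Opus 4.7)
The plan is to exhibit explicitly a regular language $L$ and an alphabet order $<$ making $S = (L,\Sigma,<)$ an abstract numeration system in which $\rep_S(f(\N))$ is regular. The guiding principle is that, if $\mathbf v_L = f$ on the nose, then the $S$-value of the lexicographically smallest length-$(n+1)$ word of $L$ is exactly $f(n)$; so provided the set of these smallest words forms a regular sublanguage of $L$, the theorem follows, with finite discrepancies absorbed by the closure of regular languages under finite symmetric difference.

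First I would realize each term $P_i(x)\alpha_i^x$ as a building block. Writing $P_i$ as an $\N$-linear combination of binomials $\binom{x+j}{j}$, I would interpret each $\binom{x+j}{j}$ as $\mathbf v_{a_1^*\cdots a_j^*}$ (in the spirit of the proof of the preceding proposition) and each $\alpha_i^x$ as $\mathbf u_{\Sigma^*}$ with $|\Sigma|=\alpha_i$. Concatenation on disjoint alphabets multiplies generating functions, and disjoint unions on disjoint alphabets add counting functions. Passing to first differences $\mathbf u_{L_i}(n):=\mathbf v_{L_i}(n)-\mathbf v_{L_i}(n-1)$, which are eventually nonnegative integers, each building block can thus be realized as a regular language $L_i$ over its own alphabet $\Sigma_i$.

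Next I would assemble $L:=L_1\sqcup\cdots\sqcup L_k$ over pairwise disjoint alphabets, so that $\mathbf v_L=\sum_i \mathbf v_{L_i}$ agrees with $f$ up to a finite correction. I would then designate one letter $a$ from $\bigsqcup_i\Sigma_i$ lying inside a free-monoid factor (so $a^*\subseteq L$), declare $a$ smallest in the global order, and set $M:=a^*$. The lexicographically smallest length-$n$ word of $L$ is then $a^n$, so $M$ is regular and $\val_S(M)=f(\N)$ modulo a finite symmetric difference, whence $\rep_S(f(\N))$ is regular.

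The main obstacle is not conceptual but combinatorial bookkeeping: aligning $\mathbf v_L$ with $f$ as integer-valued functions (not merely asymptotically), forcing the first-difference quantities used in the building blocks to be honest nonnegative integers on all of $\N$ (which requires handling small-$n$ boundary cases by slight padding or amendment of $L$), and coherently threading the distinguished letter $a$ through the disjoint-union construction. The furthermore clause is then immediate from the construction, since $L$ is polynomial iff every $\alpha_i\le 1$ iff $f$ has polynomial growth iff $f(\N)$ is polynomial, and analogously for exponentiality.
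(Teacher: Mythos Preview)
The paper does not supply a proof of this theorem; it is quoted from \cite{Rigo} and used as a black box, so there is no in-paper argument to compare against. Your high-level strategy---engineer $L$ so that $\mathbf v_L$ matches $f$, then note that the genealogically least words of each length form the regular set $a^*$ and have $S$-values $\mathbf v_L(0),\mathbf v_L(1),\ldots$---is indeed the idea behind Rigo's construction. However, two steps in your assembly of $L$ do not work as written, and they are more than bookkeeping.

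First, a polynomial $P\in\Q[x]$ with $P(\N)\subseteq\N$ need not be an $\N$-linear combination of the $\binom{x+j}{j}$; already $P(x)=x=\binom{x+1}{1}-\binom{x+0}{0}$ forces a negative coefficient. Second, while it is true that concatenation over disjoint alphabets multiplies the generating series $\sum_n \mathbf u_L(n)z^n$, that is the wrong operation here: you need the \emph{pointwise} product $n\mapsto\binom{n+j}{j}\alpha_i^{\,n}$ to arise as some $\mathbf u_M(n)$ or $\mathbf v_M(n)$, and convolution of $\mathbf u$-sequences does not give this. Concretely, $a^*\cdot\{b,c\}^*$ has $\mathbf u(n)=2^{n+1}-1$, not $(n+1)2^n$. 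The paper itself, when invoking Rigo's result in the proof of Proposition~\ref{prop:kl}, recalls that the languages in question are built by \emph{unions and shuffles} over disjoint alphabets; it is that more delicate combination, not plain concatenation, that produces growth functions of the shape $n^c\alpha^n$. Until you replace concatenation by the correct operation and handle the sign issue in the binomial expansion, the construction does not yield a regular $L$ with $\mathbf v_L=f$.
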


Let us recall the following result of Durand and Rigo.

\begin{theorem}\cite{DR}\label{the:DR}
Let $S$ be an abstract numeration system built on a polynomial regular language 
and let $T$ be an abstract numeration system built on an exponential regular language. 
If a subset of $\N$ is both $S$-recognizable and $T$-recognizable, then it is ultimately periodic.
\end{theorem}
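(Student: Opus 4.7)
The plan is to derive ultimate periodicity by squeezing the growth function of $X$ from both sides via Theorem~\ref{the:main}, and then using a Cobham-style rigidity argument on morphic presentations of $\chi_X$ to promote the tight growth information to the desired conclusion.

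First, by Corollary~\ref{cor:poly} applied to the $S$-recognizability of $X$, we have $t_X(n)=\Theta(n^r)$ for some rational $r\ge 1$. Now apply Theorem~\ref{the:main} to $X$ with respect to $T$, using the notation of that theorem: parameters $(p,c,\alpha)$ for $\mathbf{v}_{L_T}$ and $(q,d,\beta)$ for $\mathbf{v}_{\rep_T(X)}$. Since $L_T$ is exponential, $\sqrt[p]{\alpha}>1$. The case $\beta=1$ of the theorem would yield $t_X(n)=\Theta(n^{c/d}(\sqrt[p]{\alpha})^{\Theta(n^{1/d})})$, which is super-polynomial and contradicts the polynomial bound just established; hence $\beta>1$. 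Matching $\Theta(n^r)$ with $\Theta((\log n)^{c-df}n^f)$, where $f=\log(\sqrt[p]{\alpha})/\log(\sqrt[q]{\beta})$, then forces $f=r$ and $c=df$. In particular, $\sqrt[p]{\alpha}=(\sqrt[q]{\beta})^r$ with $r$ rational, so the dominant growth rates of $L_T$ and $\rep_T(X)$ are multiplicatively dependent with rational exponent.

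With the growth thus rigidly constrained, the remaining step is to pass from growth to structure. By Proposition~\ref{pro:aut} and Theorem~\ref{the:RM}, the characteristic sequence $\chi_X$ admits two morphic presentations: one built from the substitution $\mu_{\mathcal A}$ associated with the product automaton for $(S,X)$, whose underlying numeration automaton has polynomial counting function, and the other for $(T,X)$, whose underlying automaton has exponential counting function. A Cobham-style rigidity argument then asserts that a sequence simultaneously admitting two substitutive presentations of these fundamentally incompatible Perron--Frobenius types must be ultimately periodic; consistency is provided by Theorem~\ref{the:ult_per}, which guarantees that every ultimately periodic set is indeed $S$-recognizable for every $S$.

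The main obstacle is this final rigidity step. The classical Cobham theorem addresses two exponential systems whose dominant eigenvalues are multiplicatively independent, whereas here the dichotomy is polynomial versus exponential. I expect to handle this by exploiting the very restricted normal form of polynomial regular languages as finite unions of expressions of the form $w_0x_1^*w_1\cdots x_k^*w_k$, which sharply constrains $\rep_S(X)$ and thus the kernel of $\chi_X$, combined with the substitutive self-similarity forced by $T$-automaticity; a pumping and density argument should then show that these two constraints are simultaneously satisfiable only on ultimately periodic sequences.
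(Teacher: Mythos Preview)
The paper does not prove this theorem at all: it is quoted from \cite{DR} and used as a black box to derive Proposition~\ref{prop:poly}. So there is no ``paper's own proof'' to compare against; what you have written is an attempt at an original proof of the Durand--Rigo result.

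Your growth-function comparison (the first two paragraphs) is correct as far as it goes, but it does not move you toward the conclusion. You correctly deduce $t_X(n)=\Theta(n^r)$ from the polynomial side and then match this against the exponential side to force $\beta>1$, $f=r$, and $c=df$, hence multiplicative \emph{dependence} between $\sqrt[p]{\alpha}$ and $\sqrt[q]{\beta}$. But in Cobham-type theorems, multiplicative dependence is precisely the case in which non-periodic sets \emph{can} be shared; it is multiplicative \emph{independence} that forces periodicity. So the information you extract is merely a consistency check, not a step toward ultimate periodicity. Indeed, growth alone cannot suffice: the Thue--Morse set (discussed in the paper) has $t_X(n)=\Theta(n)$ and is $2$-recognizable, yet is not ultimately periodic; what the theorem asserts is that such a set cannot also be $S$-recognizable for polynomial $S$, and this is exactly what remains to be shown.

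You acknowledge this yourself: the ``main obstacle'' is the final rigidity step, for which you offer only a plan (``I expect to handle this by\ldots''). That plan---exploit the bounded-language normal form $w_0x_1^*w_1\cdots x_k^*w_k$ on the polynomial side, combine with substitutive self-similarity on the exponential side, and run a pumping/density argument---is a reasonable heuristic, and indeed the actual Durand--Rigo proof proceeds along related dynamical lines, but turning this into a proof is the entire content of the theorem. As written, your proposal is a correct preliminary analysis followed by an unproved assertion that the hard part can be done.
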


\begin{proposition}\label{prop:poly}
Let $S$ be an abstract numeration system built on an exponential regular language. 
If $f\in\Q[x]$ is a polynomial of degree greater than $1$ such that $f(\N)\subseteq\N$, 
then the set $f(\N)$ is not $S$-recognizable.
\end{proposition}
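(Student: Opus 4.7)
The plan is a short proof by contradiction that combines Theorem~\ref{the:construction} with the Durand--Rigo dichotomy (Theorem~\ref{the:DR}). Suppose, for contradiction, that $f(\N)$ is $S$-recognizable for some abstract numeration system $S$ built on an exponential regular language. Since $f\in\Q[x]$ satisfies $f(\N)\subseteq\N$, I apply Theorem~\ref{the:construction} in the degenerate form with $k=1$, $P_1=f$, and $\alpha_1=1$: because $f(\N)$ is a polynomial set (its counting function is $\Theta(n^{1/\deg f})$, or equivalently $t_{f(\N)}(n)=\Theta(n^{\deg f})$), this produces an abstract numeration system $S'$ built on a \emph{polynomial} regular language $L'$ such that $f(\N)$ is $S'$-recognizable.

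Now $f(\N)$ is simultaneously $S'$-recognizable (with $L'$ polynomial) and $S$-recognizable (with numeration language exponential). Theorem~\ref{the:DR} then forces $f(\N)$ to be ultimately periodic. It remains to derive a contradiction from $\deg f\ge 2$: for $n$ sufficiently large, $f$ is strictly increasing on $\N$, so $t_{f(\N)}(n)=f(n)$ eventually, and the consecutive gap $t_{f(\N)}(n+1)-t_{f(\N)}(n)=f(n+1)-f(n)$ is a polynomial in $n$ of degree $\deg f-1\ge 1$, hence unbounded. However, any ultimately periodic subset of $\N$ has consecutive gaps bounded by its eventual period, giving the required contradiction.

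Because every non-trivial step is delegated to an already-stated result, there is no real technical obstacle; the proof is essentially an assembly argument. The only point that must be verified carefully is that the ``polynomial'' clause of Theorem~\ref{the:construction} is applicable with parameters $(k,P_1,\alpha_1)=(1,f,1)$, which is immediate since $f$ itself is a polynomial with $f(\N)\subseteq\N$. I would also briefly remark that the hypothesis $\deg f\ge 2$ is essential: for $\deg f=1$, the set $f(\N)$ is an arithmetic progression, hence ultimately periodic, and Theorem~\ref{the:ult_per} shows that such sets are $S$-recognizable for \emph{every} abstract numeration system $S$ built on a regular language.
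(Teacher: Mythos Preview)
Your argument is correct and follows exactly the route the paper intends: the paper's own proof is the single sentence ``It follows directly from Theorems~\ref{the:construction} and~\ref{the:DR},'' and you have simply unpacked that sentence, including the (implicit) final step that $f(\N)$ cannot be ultimately periodic when $\deg f\ge 2$.
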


\begin{proof}
It follows directly from Theorems \ref{the:construction} and \ref{the:DR}.
\end{proof}

The latter result has to be compared with Proposition~\ref{prop:kl}.
 
\section{Acknowledgments}
We would like to thank Michel Rigo for his initial questions that led
to this work and for his advice during the preparation of this paper.
We would also like to thank Jeffrey Shallit for some helpful
discussions.  We would especially like to thank our dear friend Anne
Lacroix for many interesting discussions during the course of this
work.

%\bibliography{bibliographie}
\bibliography{C:/Users/Emilie/Documents/bibliographie/bibliographie}
\bibliographystyle{alpha}

\end{document}